\documentclass[twocolumn, prx,superscriptaddress]{revtex4-2}

\usepackage{amsmath, amssymb, amstext, amsfonts}
\usepackage{amsthm}
\usepackage{float}
\usepackage{graphicx}
\usepackage{physics}
\usepackage{xcolor}
\usepackage{lineno}

\def\R{\mathbb{R}}
\def\S{{\cal S}}
\def\P{{\cal P}}
\def\App{Appendix }

\newtheorem{theo}{Theorem}

\newtheorem{lemma}[theo]{Lemma}

\newtheorem{problem}{Problem}

\def\id{{\mathbb I}}

\def\braket#1#2{\langle#1|#2\rangle}

\def\be{\begin{equation}}
\def\ee{\end{equation}}
\def\bea{\begin{eqnarray}}
\def\eea{\end{eqnarray}}
\def\bma{\begin{mathletters}}
\def\ema{\end{mathletters}}

\def\P{{\cal P}}

\def\q0{\underline{0}}

\def\H{{\cal H}}

\def\P{{\cal P}}
\def\S{{\cal S}}
\def\C{{\mathbb C}}
\def\id{{\mathbb I}}

\def\M{{\cal M}}
\def\H{{\cal H}}

\def\R{\mathbb{R}}

\def\App{Appendix\;}

\begin{document}

\title{Quantum advantages for transportation tasks: projectiles, rockets and quantum backflow}
\author{David Trillo} 
\affiliation{Institute for Quantum Optics and Quantum Information (IQOQI) Vienna, Boltzmanngasse 3, A-1090 Vienna}
\affiliation{University of Vienna, Faculty of Physics \& Vienna Doctoral School in Physics,  Boltzmanngasse 5, A-1090 Vienna, Austria}
\author{Thinh P. Le}
\author{Miguel Navascu{\'e}s}
\affiliation{Institute for Quantum Optics and Quantum Information (IQOQI) Vienna, Boltzmanngasse 3, A-1090 Vienna}
\begin{abstract}

Consider a scenario where a quantum particle is initially prepared in some bounded region of space and left to propagate freely. After some time, we verify if the particle has reached some distant target region. We find that there exist `ultrafast' (`ultraslow') quantum states, whose probability of arrival is greater (smaller) than that of any classical particle prepared in the same region with the same momentum distribution. For both projectiles and rockets, we prove that the quantum advantage, quantified by the difference between the quantum and optimal classical arrival probabilities, is limited by the Bracken-Melloy constant $c_{bm}$, originally introduced to study the phenomenon of quantum backflow. In this regard, we substantiate the $29$-year-old conjecture that $c_{bm}\approx 0.038$ by proving the bounds $0.0315\leq c_{bm}\leq 0.072$. Finally, we show that, in a modified projectile scenario where the initial position distribution of the particle is also fixed, the quantum advantage can reach $0.1262$.

\end{abstract}

\maketitle

\section{Introduction}

Much of current research in quantum theory focuses on the exploitation of quantum effects in communication and computation. Nevertheless, quantum systems are originally found to be advantageous for mechanical tasks. A paradigmatic example is the tunneling effect \cite{tunneling1}: A quantum particle can be detected in regions of space that are classically forbidden by energy considerations. Another  noteworthy example is quantum backflow: A free quantum particle with positive momentum can be observed to propagate backwards. Quantum backflow was first identified by Allcock in the context of the time-of-arrival problem \cite{Allcock}, and later isolated by Bracken and Melloy \cite{BM}. More recent examples of quantum advantage in mechanical systems can be found in \cite{tsirelson} and \cite{valerio}.

The advantages that quantum mechanical systems might offer for transportation, understood as the quick dispatch of massive particles through free space, are, however, unexplored. Some effort has been paid to investigate the properties of a hypothetical quantum time-of-arrival operator \cite{muga2007time} in connection with quantum backflow. Perhaps due to its foundational character, this research program has not produced so far any concrete task where quantum mechanical systems have the upper hand.

In this work, we prove the advantage of quantum mechanical systems over their classical counterparts in a practical transportation task, which we call the {\em projectile scenario}. Consider a situation where a non-relativistic one-dimensional quantum particle (a projectile) is prepared in some bounded region of space $B$ and left to propagate freely. After some time $\Delta T$, we measure if the particle is in some distant target region $R$. For a fixed initial quantum state $\rho$ with spatial support in $B$, we compare the probability of detection in $R$ with that of a classical particle, initially prepared in $B$ with the same momentum distribution as $\rho$. 

We find that there exist what one might call \emph{ultra-fast states} (\emph{ultra-slow states}), whose probability of detection in $R$ at time $\Delta T$ is strictly greater (smaller) than that of any classical particle. A natural figure of merit for quantum advantage in the ultra-fast regime is the difference between the quantum and the maximum classical probabilities of arrival. Likewise, in the ultra-slow regime one can consider the difference between the minimum classical and the quantum probabilities of arrival. We find that the maximum quantum advantage in either case does not depend on the distance between the preparation and target regions, but only on the parameter $\alpha := M|B|^2/\Delta T$. For finite values of $\alpha$, the maximum quantum-classical gap can be computed up to precision $\delta$ by diagonalizing an $N\times N$ matrix, with $N=O\left(\log\left(1/\delta\right)\right)$.

We prove that the maximum quantum advantage, achieved in the limit $\alpha \rightarrow \infty$, equals the Bracken-Melloy constant \cite{BM}, which was numerically estimated to have the value $c_{bm} \approx 0.0384517$ \cite{backflow1, backflowgood}. This conjectured value was, however, not computed with any rigorous error bounds. In fact, until now there was no reason to believe that $c_{bm}$ was smaller than $1$. In this regard, we argue that $0.0315 \leq c_{bm} \leq 0.0725$, hence providing the first upper bound on $c_{bm}$.

As we show, the appearance of $c_{bm}$ is not a coincidence: through simple metaplectic transformations we connect the quantum projectile problem with a variety of scenarios related to and generalizing quantum backflow, including quantum backflow itself. All such effects are therefore manifestations of the same mathematical phenomenon, seen through different coordinate systems. In the light of the recent interest in experimentally demonstrating quantum backflow \cite{exp2, exp, backflowfriend, backflowexperimentfriend, classical_experiment}, we argue that projectile scenarios are more experimentally friendly and operationally interesting.

To arrive at a transportation task with a quantum advantage beyond the Bracken-Melloy constant, we consider a scenario in which several projectiles are sequentially released, namely, a quantum rocket. However, it turns out that $c_{bm}$ also limits the advantage of a quantum rocket over a classical analog with the same lift-off zone, combustion chamber size and rocket and fuel momentum distributions. 

Nevertheless, we show that a superior quantum advantage can actually be attained in a variant of the projectile scenario where the quantum projectile is compared with a classical particle having the same position and momentum distributions.

The paper is structured as follows: in section \ref{class_vs_quant_proj} we introduce and solve the projectile scenario; the connection between quantum projectiles and other examples of quantum advantage in mechanical systems is explained in section \ref{connection_sec}. In section \ref{rocket_sec} we provide a simple model for quantum rockets and use it to prove that the classical-quantum gap in such artifacts is also limited by the Bracken-Melloy constant. In section \ref{variant_sec}, we add a natural constraint to the projectile scenario so that the Bracken-Melloy limit can be superseded. Finally, in section \ref{conclusion_sec} we present our conclusions. We also provide some Appendices in which the lengthier computations are made more explicit.

\section{Classical vs. quantum projectiles}
\label{class_vs_quant_proj}

\begin{figure*}
    \centering
    \includegraphics[width=1\linewidth]{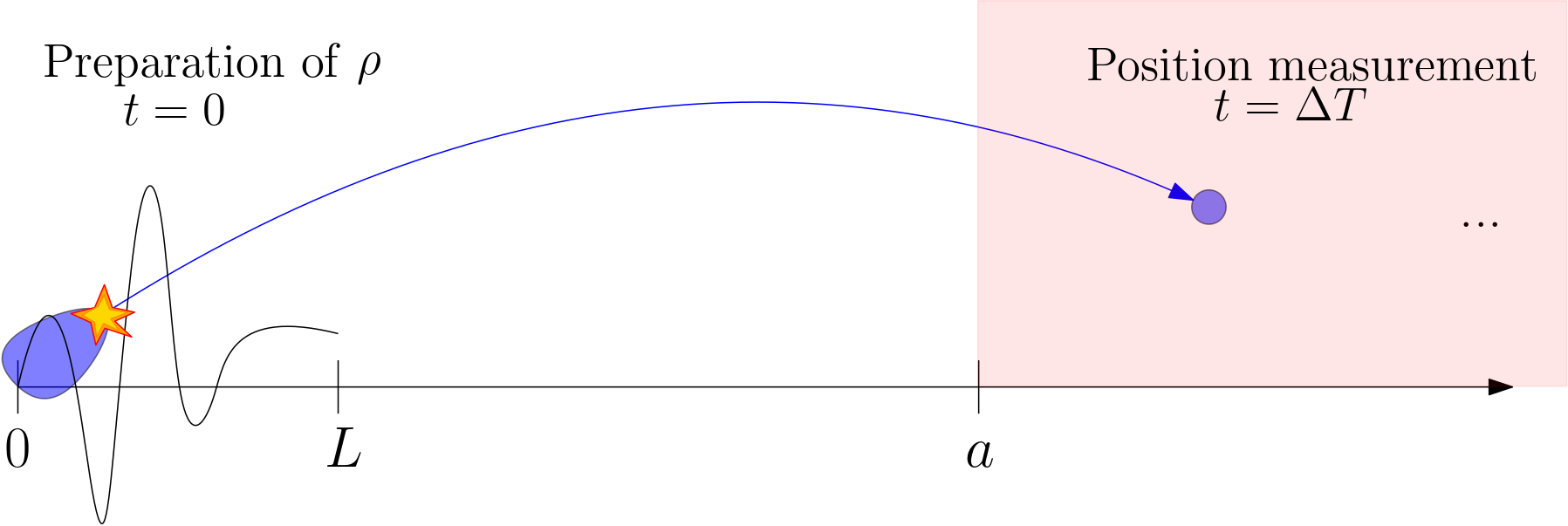}
    \caption{Projectile scenario. A projectile is prepared at time $t=0$ in $[0,L]$ and, at time $t=\Delta T$, we verify that it has reached region $[a,\infty)$. Maximum quantum advantage in probability of arrival as compared to a classical particle is found to be the  Bracken-Melloy constant, $0.0315 \leq c_{bm} \approx 0.0384517 \leq 0.0725$.}
    \label{fig: setup}
\end{figure*}

Our starting point is a classical projectile of mass $M$, prepared at time $t=0$ in the region $[0,L]$. At time $t=\Delta T>0$, we observe whether the projectile has reached region $[a,\infty)$, with $a>L$ (see Figure \ref{fig: setup}). If we ignore where exactly in $[0,L]$ the projectile was prepared, then the probability of finding it in $[a,\infty)$ at time $\Delta T$ is, at most, $\mbox{Prob}\left(p\geq M(a-L)/\Delta T\right)$, where $p$ denotes the projectile's linear momentum. This corresponds to a configuration where the projectile was prepared at $x=L$ at time $t=0$. Similarly, the probability to find the projectile in $[a,\infty)$ at time $\Delta T$ is, at least, $\mbox{Prob}\left(p\geq Ma/\Delta T\right)$, which corresponds to an initial preparation at $x=0$.

Now, let us assume that the projectile is, in fact, a quantum mechanical system. Let ${\cal S}(R)$ denote the set of quantum states with spatial support in $R \subset \R$. We will omit the parentheses whenever $R$ is an interval, and thus denote by $\rho\in \S[0,L]$ the initial quantum state of the projectile. While the projectile is freely propagating, its dynamics are governed by the kinetic Hamiltonian $H=P^2/2M$, where $P$ denotes the projectile's linear momentum operator. The probability to find the quantum projectile in region $[a,\infty)$ after time $\Delta T$ can be found by simple application of the Born rule: it is $\tr(U\rho U^\dagger\Theta(X-a))$, where $U:=e^{-iH\Delta T}$ and $\Theta$ is the Heaviside step function. Note that we work in units where $\hbar=1$.

If, after time $\Delta T$, the quantum projectile is found in $[a,\infty)$ with probability greater than any classical particle initially prepared in $[0,L]$ with the same momentum distribution, we say that the quantum projectile is \emph{ultra-fast}. If, on the contrary, the projectile is detected with probability lower than the classical minimum, we say that the projectile is \emph{ultra-slow}. To gauge how ultra-fast or ultra-slow a quantum projectile in state $\rho$ is, we consider the difference between the quantum and optimal classical probabilities of arrival.

Let us deal with the ultrafast case first. As we saw in the first paragraph of this section, a classical projectile with momentum distribution $\nu(p)dp$ will be detected in $[a, \infty)$ at time $\Delta T$ with probability at most $\mbox{Prob}\left(p\geq M(a-L)/\Delta T\right)$. The probability of this event is to be evaluated on the distribution $\nu(p)dp$. Since we have assumed $\nu(p)dp$ to be the same as the momentum distribution of a quantum particle in state $\rho$, this implies
\[
\mbox{Prob}\left(p\geq \frac{M(a-L)}{\Delta T}\right) = \tr\left[\rho \Theta\left(\frac{\Delta T}{M}P - (a-L)\right)\right].
\]
Thus the quantum advantage, if it exists, is given by $\tr(\rho \Omega_F(M,a,\Delta T))$, with
\[
\Omega_F(M,a,\Delta T) := \Theta\left(X+\frac{\Delta T}{M}P-a\right) - \Theta\left(\frac{\Delta T}{M}P-a+L\right),
\]
where, in the first term of the right-hand side, we made use of the identity \footnote{The identity is a consequence of the formulas $\frac{dX}{dt}=i[H,X]=\frac{P}{2M}$, $\frac{dP}{dt}=i[H,P]=0$.} $U^\dagger X U=X+\Delta TP/M$.

We wish to find the largest advantage achievable with a quantum projectile. That is, we are interested in the quantity
\[
\varphi_F(M,L,a,\Delta T) := \sup_{\rho \in \S[0,L]} \tr(\rho \Omega_F(M,a,\Delta T)).
\] 
Given a set of states $S$ and an operator $A$, we have, for any unitary $U$, that 
\[
\sup_{\rho \in S} \tr(\rho A) = \sup_{\rho \in USU^\dagger} \tr(\rho UAU^\dagger).
\]
We next exploit this observation to prove that $\varphi_F$ is just a function of $\alpha := ML^2/\Delta T$. In particular, $\varphi_F$ does not depend on $a$, the location of the target region: remarkably, quantum projectiles are equally advantageous no matter how large the flight distance.

Let $\sigma : \R^2 \rightarrow \R^2$ be an affine linear transformation and consider the vector of operators $(X,P)$. If $\sigma$ is metaplectic, namely $[\sigma(X,P)_1,\sigma(X,P)_2]=[X,P]=i$, then, as we show in \App A, there exists a unitary $U_\sigma$ such that
\begin{equation}
 \left(U_\sigma XU^\dagger_\sigma,U_\sigma P U_\sigma^\dagger\right) =\sigma(X,P).
\end{equation}

Now, consider the unitary $V$ associated to the metaplectic map
\begin{equation}
\begin{aligned}
x & \longmapsto \sqrt{\frac{M}{\Delta T}}(x-L),\\
p & \longmapsto \sqrt{\frac{\Delta T}{M}}p-\sqrt{\frac{M}{\Delta T}}(a-L).
\end{aligned}
\label{ultra2stand}
\end{equation}
For $\alpha=ML^2/\Delta T$, it follows that
\begin{gather*}
V\S[-\sqrt{\alpha},0]V^\dagger=\S[0,L],\\
V\Omega_F(M,a,\Delta T)V^\dagger=\Theta(X+P)-\Theta(P)=:\Omega,
\end{gather*}
therefore
\begin{equation}
\varphi_F(M,L,a,\Delta T)=\varphi(\alpha):= \sup_{\rho \in \S[-\sqrt{\alpha},0]} \tr(\rho \Omega).
\label{standard}    
\end{equation}
\noindent Hence, $\varphi_F$ is just a function of $\alpha$. We call the right-hand side of the above equation the standard projectile problem, or \emph{standard problem} for short. Note that the standard problem corresponds to determining the maximum quantum advantage of an ultrafast projectile of mass $M=1$, prepared in the region $[-\sqrt{\alpha},0]$, to be found in region $[0,\infty)$ after time $\Delta T = 1$. 

So far we have only considered ultrafast projectiles. For the ultraslow case, the story is pretty much the same, but opposite: namely, we are now interested in \emph{not} finding the particle in the target region $[a,\infty)$ after time $\Delta T$ has elapsed. The optimal classical strategy is now to concentrate all the mass at point $x=0$. In this case, the probability that a classical projectile, prepared at time $t=0$ in $[0,L]$ with the same momentum distribution as the quantum state $\rho$, reaches the target region at time $t=\Delta T$ is given by
\[
\mbox{Prob}\left(p\geq \frac{Ma}{\Delta T}\right) = \tr\left[\rho \Theta\left(\frac{\Delta T}{M}P - a\right)\right],
\]
and so the quantum advantage, if it exists, of not finding the particle in the target region is quantified by $\tr(\rho \Omega_S(M,a,\Delta T))$, with
\[
\Omega_S(M,a,\Delta T) := \Theta\left(\frac{\Delta T}{M}P-a\right)-\Theta\left(X+\frac{\Delta T}{M}P-a\right).
\]
The maximum quantum advantage is thus 
\[
\varphi_S(M,L,a,\Delta T) := \sup_{\rho \in \S[0,L]} \tr(\rho \Omega_S(M,a,\Delta T)).
\]

As it turns out, $\varphi_S=\varphi$, and so the functions $\varphi_F$, $\varphi_S$ are identical. Indeed, consider the transformation 
\begin{equation}
\sigma(x,p)= \left(-\sqrt{\frac{M}{\Delta T}}x, \sqrt{\frac{\Delta T}{M}}p+\sqrt{\frac{M}{\Delta T}}(x-a)\right). 
\end{equation}
Since $[\sigma(X,P)_1,\sigma(X,P)_2]=-i$, this map does not define a unitary transformation over the set of quantum states. Rather, it defines an \emph{anti-unitary} transformation $U_\sigma$, as explained in \App A.  Now, the argument above relating linear optimizations over subsets of quantum states also extends to anti-unitary transformations. The reader can verify that, applying $U_\sigma$ to the standard problem with $\alpha=ML^2/\Delta T$, one ends up with the definition of $\varphi_S$, and, therefore, $\varphi_S(M,L,a,\Delta T)=\varphi\left(ML^2/\Delta T\right)$.

In section \ref{solving_standard_sec}, we will prove that $\varphi(\alpha)>0$ for all $\alpha>0$, i.e., there exist ultrafast and ultraslow quantum states in any projectile scenario. From eq. (\ref{standard}) it is clear that $\varphi(\alpha)$ is a non-decreasing function. Moreover, as shown in section \ref{connection_sec}, its limiting (supremum) value $\varphi(\infty)$ corresponds to the Bracken-Melloy constant $c_{bm}$ \cite{Bracken}, conjectured to have the value $0.0384517$. We conclude that quantum projectiles can exhibit a limited advantage with respect to their classical counterparts.

We finish this section by introducing yet another projectile scenario. As before, we wish the quantum projectile to have a larger probability of arrival, but this time we award some advantage to the classical projectile: namely, we compare the probability to detect the quantum projectile in the region $[a,\infty)$ with the maximum probability of detecting the classical one in the larger region $[a-b,\infty)$, with $b>0$. This problem can be reduced, via the transformation (\ref{ultra2stand}), to an optimization of $\left\langle\Theta(X+P)-\Theta(P+\beta)\right\rangle_\rho$ over $\rho\in\S[-\sqrt{\alpha},0]$, with $\alpha=ML^2/\Delta T$, $\beta=b\sqrt{M/\Delta T}$. We denote this problem the \emph{extended standard problem}, with solution $\varphi(\alpha,\beta)$. Clearly, $\varphi(\alpha,\beta)$ is non-increasing in $\beta$ and $\varphi(\alpha,0)=\varphi(\alpha)$. Obviously, $\lim_{\beta\to\infty}\varphi(\alpha,\beta)=0$, and so one cannot reduce the extended standard problem to the standard problem.

\subsection{Solving the standard problem}
\label{solving_standard_sec}
From the formulation of the standard problem (\ref{standard}), one can immediately deduce that $\varphi$ is a non-decreasing function of $\alpha \in [0,\infty)$, with $\varphi(0) = 0$ and $\varphi(\alpha) \leq 1$. It remains to see that $\varphi(\alpha)\neq 0$ for some $\alpha$. To do this, we need to study the spectrum of $\Omega:=\Theta(X+P)-\Theta(P)$ restricted to the space $\S[-\sqrt{\alpha},0]$. In \App B we prove that, in position representation,
\begin{equation} \label{eq: operator in position}
\Omega{\big|}_{\S[-\sqrt{\alpha},0]} = \frac{1}{2\pi} \int_{[-\sqrt{\alpha},0]^2} dx dy \frac{e^{\frac{i}{2}(y^2-x^2)}-1}{i(y-x)}\ketbra{x}{y}
\end{equation}

\begin{figure}[t!]
    \centering
    \includegraphics[width=0.5\textwidth]{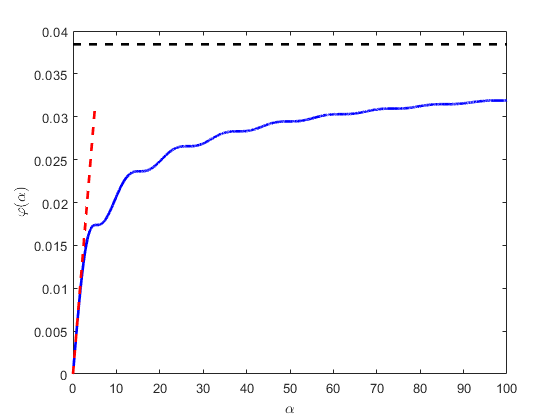}
    \caption{Solid blue: plot of $\varphi(\alpha)$ for $\alpha \in [0,100]$, computed with precision $\delta = 10^{-4}$. Dashed red: linear upper bound $(2\sqrt{3}-3)\alpha/24\pi$. Dashed black: the conjectured value of the Bracken-Melloy constant $c_{bm}$}
    \label{fig: phiplot}
\end{figure}

Let $K(x,y)$ be the kernel of this integral operator. If $\alpha > 0$, then we can choose $z \in (-\sqrt{\alpha},0)$ such that $K(0,z) =  K(z,0)^* \neq 0$. Since $K(0,0)=0$, by the determinant criterion it follows that the $2\times 2$ matrix $\{K(x,y)\}_{x,y=0,z}$ is not negative semidefinite. In particular, it has a positive eigenvalue $\lambda$, with eigenvector $(c_0,c_z)^T$. Now, consider the ket
\[
\ket{\psi_\varepsilon} = \frac{1}{\sqrt{\varepsilon}} \int_{[-\sqrt{\alpha},0]} dx (c_0 \chi_{[-\varepsilon,0]}(x) + c_z \chi_{[z-\varepsilon,z]}(x)) \ket{x},
\]
where $\chi_C$ denotes the characteristic function of $C\subset \R$. For small enough $\varepsilon$, $\ketbra{\psi_\epsilon}{\psi_\epsilon}\in \S[-\sqrt{\alpha},0]$ and $\bra{\psi_\varepsilon}\Omega\ket{\psi_\varepsilon} \approx \varepsilon \lambda >0$. We conclude that $\varphi(\alpha)>0$ for all $\alpha >0$, so ultrafast and ultraslow states exist in all projectile scenarios. 

The problem of computing $\varphi(\alpha)$ for different values of $\alpha$ is more convoluted. Note that the kernel $K(x,y)$ is analytic in $x,y$; hence, for $x,y\in [-\sqrt{\alpha},0]$, we can approximate it up to arbitrary precision by a polynomial on $x$ and $y$ of sufficiently high degree. When we replace $K(x,y)$ by its $N^{th}$ order Taylor expansion, we arrive at a new operator $\Omega_N$, which can be shown to be close in operator norm to $\Omega$, restricted to the subspace of wave functions defined in $[-\sqrt{\alpha},0]$. In turn, $\Omega_N$ only has support on the finite-dimensional subspace spanned by vectors of the form $\int_{[-\sqrt{\alpha},0]} dx x^k\ket{x}$, where $k$ runs from $0$ to the degree in $x$ of the kernel of $\Omega_N$. Hence $\Omega_N$ can be exactly diagonalized. In \App B this argument is developed to conclude that, for finite $\alpha$, we can compute $\varphi(\alpha)$ to any precision $\delta$ we want by diagonalizing a matrix of size $N \lesssim \max(\alpha, \log(1/\delta))$. In the same appendix, the reader can also find the following (tight) linear upper bound for $\varphi(\alpha)$:
\[
\varphi(\alpha) \leq \frac{2\sqrt{3}-3}{24\pi} \alpha.
\]
The function $\varphi(\alpha)$ is plotted for $\alpha\in [0,100]$ in Figure \ref{fig: phiplot}. As it can be appreciated, $\varphi(\alpha)$ roughly looks like a concave function, but not quite: at regular intervals, the slope of the function becomes very small. Such `steps' seem to decrease in amplitude as $\alpha$ grows, and, actually, for $\alpha\gg 1$, the function appears to be well approximated by the ansatz $r+s\alpha^{-1/2}$.

To grasp the maximum quantum advantage, we need to study the limiting case $\alpha = \infty$. The problem thus consists in determining the spectrum of $\Omega$, restricted to the space $L^2(-\infty, 0]$. To study this case, it is convenient to switch to the Wigner function representation.

The Wigner function of a quantum state $\rho$ is
\[
W_\rho (x,p):=\frac{1}{2\pi}\int_{-\infty}^{\infty} dy \bra{x-\frac{y}{2}}\rho \ket{x+\frac{y}{2}} e^{ipy}.
\]

For convenience, we recall the properties of Wigner functions in \App A. The most important one for us is the fact that Wigner functions behave nicely under metaplectic transformations in phase space. Namely, for any metaplectic transformation $U_\sigma$, it holds that
\begin{equation} \label{eq: wigner covariance}
W_\rho(\sigma^{-1}(x,p)) = W_{U_\sigma \rho U_\sigma^\dagger}(x,p).
\end{equation} 
Furthermore, for any bounded measurable function $f: \R \rightarrow \R$ and $a,b,c\in \R$, we have that
\begin{equation}
\tr(\rho f(aX+bP+c)) = \int_{\R^2} dx dp f(ax+bp+c) W_{\rho}(x,p),
\label{eq: f PS}
\end{equation}
where some care has to go into the precise meaning of the integral whenever the integrand is not Lebesgue integrable. Finally, note that, if $\rho$ has a convex support $R$ in either position or momentum, then the support of its Wigner function $W_\rho(x,p)$ corresponding to that variable is also contained in $R$.

Now, for any state $\rho$, we have, by eq. (\ref{eq: f PS}), that 
\[
\tr(\rho\Omega)=\int_{\R^2} dxdp W_\rho(x,p)(\Theta\left(x+p\right)-\Theta\left(p\right)).
\]
The last factor on the integrand will vanish everywhere, except in the regions $\Lambda^+=\{x+p\geq 0, p\leq 0\}$, where it equals $1$, and $\Lambda^-=\{x+p\leq 0, p\geq 0\}$, where it equals $-1$. However, if $\rho\in \S(-\infty, 0]$, then $W_\rho(x,p)=0$, for $x>0$. Since $(x,p)\in\Lambda^+$ implies $x\geq 0$, it follows that the first region does not contribute to the integration above. Hence,
\[
\varphi(\infty) = \sup_{\rho \in \S[-\infty,0]} -\int_{\Lambda^-} dx dp W_{\rho}(x,p).
\]
The problem of integrating Wigner functions over wedges (without any further constraints) was studied by Werner \cite{Werner_1988} in the context of time-of-arrival operators. The idea is that all wedges can be taken to each other via a metaplectic transformation, and therefore it suffices to study the wedge $[0,\infty) \times [0,\infty)$. Under this transformation, $\varphi(\infty)$ becomes
\[
 \sup_{\rho:\tr(\rho\Theta(X+P))=1} -\int_{[0,\infty)^2} dx dp W_{\rho}(x,p),
\]
where we have used that $\S(-\infty,0]$ is the space of states that satisfy the condition $\tr(\rho\Theta(-X))=1$. Werner considers the operator corresponding to integrating Wigner functions over the quadrant $x,p\geq 0$, and determines its spectrum to be $[-0.155940, 1.007678]$. Therefore, $\varphi(\infty) \leq 0.155940$. This bound, however, does not take into consideration the constraint $\tr(\rho \Theta(X+P))=1$. To account for it, we add to Werner's operator a linear combination of operators corresponding to integrating Wigner functions over hyperbolic regions in the quadrant $x,p\leq 0$. Since our Wigner functions vanish in that quadrant, the infimum of the spectrum of the new operator (which can also be determined with the techniques in \cite{Werner_1988}) also provides an upper bound for $\varphi(\infty)$. We numerically find the bound $\varphi(\infty) \leq 0.0725$, see \App D.


In addition, via variational methods, we show that $\varphi(\infty)\geq 0.0315$. This figure is obtained by optimizing linear combinations of the average values of the operators $\Omega$, $\Theta(X)$ over density matrices with support on the first $N+1$ number states $\{\ket{n}:n=0,...,N\}$, i.e., $(X+iP)\ket{n}=\sqrt{2n}\ket{n-1}$, see \App C for details. A plot of the Wigner function of a quantum state approximately in ${\cal S}(-\infty,0]$ and approximately achieving this value can be found in Figure \ref{fig: sameP} (left).

In the next section, we will show that $\varphi(\infty)=c_{bm}$, the Bracken-Melloy constant \cite{BM}, which is conjectured to have the value $0.0384517$ \cite{backflow1,backflowgood}. Our bounds $0.0315\leq c_{bm}\leq 0.0725$ therefore support this widespread belief.

\section{Connection with other quantum mechanical effects}
\label{connection_sec}

\begin{table*}[t!]
  \centering
  \begin{tabular}{|c|c|c||c|c|c|}
  \hline
    Scenario & Operator & Set of states & $\sigma(x)$ & $\sigma(p) $ & $\alpha$\\
    \hline
    Standard problem & \parbox{4cm}{\[\Theta\left(P+X\right)-\Theta\left(P\right)\]} & \parbox{1.5cm}{\[\S[-\sqrt{\alpha},0]\]} & $x$ & $p$ & $\alpha$\\
    \hline
    Ultrafast projectile &\parbox{6.25cm}{\[\Theta\left(X+\frac{\Delta T}{M}P-a\right)-\Theta\left(\frac{\Delta T}{M}P-(a-L)\right)\]} & \parbox{1cm}{\[\S[0,L]\]} & \parbox{2cm}{\[\sqrt{\frac{M}{\Delta T}}(x-L)\]} & \parbox{3.25cm}{\[\sqrt{\frac{\Delta T}{M}}p-\sqrt{\frac{M}{\Delta T}}(a-L)\]} & \parbox{1cm}{\[\frac{ML^2}{\Delta T}\]} \\
    \hline
    Ultraslow projectile & \parbox{5cm}{\[\Theta\left(\frac{\Delta T}{M}P-a\right)-\Theta\left(X+\frac{\Delta T}{M}P-a\right)\]} & \parbox{1cm}{\[\S[0,L]\]} & \parbox{1.5cm}{\[-\sqrt{\frac{M}{\Delta T}}x\]} & \parbox{3cm}{\[\sqrt{\frac{\Delta T}{M}}p+\sqrt{\frac{M}{\Delta T}}(x-a)\]} & \parbox{1cm}{\[\frac{ML^2}{\Delta T}\]}\\
    \hline
    Quantum backflow & \parbox{4cm}{\[\Theta\left(-X-\frac{\Delta T}{M}P\right)-\Theta\left(-X\right)\]}& \parbox{1cm}{\[\P[0,\infty)\]} & \parbox{1.5cm}{\[- \sqrt{\frac{\Delta T}{M}}p\]} & \parbox{1.5cm}{\[-\sqrt{\frac{M}{\Delta T}}x\]} & \parbox{1cm}{\[\infty\]}\\
    \hline
  \end{tabular}
  \caption{Most of the optimization problems considered in this paper are of the form $\max_{\rho\in S}\tr(\rho \Omega)$, for some operator $\Omega$ and some set of states $S$. This table contains the definitions of each problem and the reversible transformations mapping the standard problem to any other. $\S(R)$ denotes the set of states with position support in $R \subset \R$, and $\P(R)$ denotes the set of states with momentum support in $R \subset\R$. We use the shorthand $\sigma(x):= \sigma(x,p)_1$ and $\sigma(p):=\sigma(x,p)_2$, and omit parentheses whenever $R$ is an interval.}
  \label{table: relations}
\end{table*}

As we have seen, the ultrafast (ultraslow) projectile problem is equivalent to the standard problem, since a unitary (anti-unitary) transformation takes us from the latter to the former. We next see that the standard projectile problem is similarly connected to the most extreme manifestation of other quantum mechanical effects. The exact correspondences are summarized in Table \ref{table: relations}. The question of understanding the relation between some of these effects was raised in \cite{reentrybackflow} and partially answered in \cite{new_QB}. Our results answer the challenge posed in \cite{reentrybackflow} from a different point of view, namely, that of (anti-)unitary equivalence, and extend the connection to other mechanical effects.

Let us start with the phenomenon of quantum backflow \cite{Allcock,BM,backflow2,backflowgood}. Consider a pure state that only has positive momentum and that is evolving freely. In position representation, we can write it as
\[
\psi(x,t) = \frac{1}{\sqrt{2\pi}} \int_{0}^\infty dp e^{ipx}e^{-ip^2 t/2M} \phi(p).
\]
for some function $\phi$ such that $\int_0^\infty \abs{\phi(p)}^2 =1$. The probability flux at the origin is therefore
\[
j(0,t) = \frac{1}{4M\pi} \int_0^\infty dp dq (p+q) e^{it(q^2-p^2)/2M} \phi(p)\phi(q)^*,
\]
and thus the integrated flux at the origin from time $0$ to time $\Delta T$ is
\[
\int_0^{\Delta T} dt j(0,t) = \frac{1}{2\pi} \int_0^\infty dpdq \frac{e^{i \frac{\Delta T (q^2-p^2)}{2M}}-1}{i(q-p)} \phi(p)\phi(q)^*.
\]
Note the similarity with eq. (\ref{eq: operator in position}). Guided by classical intuition, one would expect this integrated flux to be non-negative, since the particle is only moving to the right. However, for some quantum states $\phi(x,t)$, this magnitude can be negative: in that case, we speak of quantum backflow. 

Alternatively, we can interpret quantum backflow as a \emph{decrease} in the probability of detecting a  particle with positive momentum in the region $[0,\infty)$. This is so because, by the continuity equation 
\[\frac{\partial}{\partial t}|\psi(x)|^2=-\frac{\partial}{\partial x}j(x,t),\]
the integrated flux satisfies:
\[
\int_0^{\Delta T} dt j(0,t) = \bra{\psi}U^\dagger \Theta(X) U\ket{\psi} - \bra{\psi}\Theta(X)\ket{\psi},
\]
where $\ket{\psi}=\int dx\psi(x,0)\ket{x}$ and $U=e^{-i\frac{P^2}{2M}\Delta T}$.

Call $\P[0,\infty)$ the space of all states with positive momentum support. From all the above it follows that the maximum amount of backflow is given by
\[
\sup_{\rho\in \P[0,\infty]} \tr(\rho \left(\Theta\left(-X-P\frac{\Delta T}{M}\right)-\Theta(-X)\right)):=c_{bm},
\]
\noindent where we used the identity $\Theta(z)=1-\Theta(-z)$. The number $c_{bm}$, known in the literature as the \emph{Bracken-Melloy constant} \cite{BM}, is thus the solution a problem of the form $\sup_{\rho \in S}\tr(\rho A)$, for some space of states $S$ and some operator $A$. In fact, this problem can be obtained from the standard problem with $\alpha = \infty$ via the anti-metaplectic transformation $\sigma(x,p)= (-p\sqrt{\Delta T/M}, -x\sqrt{M/\Delta T})$. Therefore, $c_{bm} = \varphi(\infty)$.

Going through the literature on quantum backflow, one finds that $c_{bm}$ is conjectured to have the value $0.0384517$ \cite{backflow1, backflowgood}. A figure of $0.038452$ is obtained in \cite{backflowgood} by fitting many points of (an approximation to) the graph of $\varphi(\alpha)$ with the ansatz $r-s\alpha^{-1/2}$ and, a figure of $0.0384517$ is obtained in \cite{backflow1}, by fitting such points to a degree $3$ polynomial over $\alpha^{-1/2}$. To our knowledge, prior to our work there were no rigorous, non-trivial upper bounds on $c_{bm}$, and the best lower bound fell $41\%$ short of the conjectured value of the constant \cite{Halliwell_2013}. Our results in the preceding section hence give mathematical support to the conjecture $c_{bm}\approx 0.0384517$.

\begin{table*}[t!]
  \centering
  \begin{tabular}{|c|c|c||c|c|c|}
  \hline
    Scenario & Operator & Set of states & $\sigma(x)$ & $\sigma(p) $ & $\beta$\\
    \hline
    \parbox{2cm}{Extended standard problem with $\alpha=\infty$} & \parbox{5.5cm}{\[\Theta\left(P+X\right)-\Theta\left(P\right)\]} & \parbox{1.5cm}{\[\S(-\infty,\beta]\]} & $x$ & $p$ & $\beta$\\
    \hline
    \parbox{2cm}{Generalized Quantum Backflow} & \parbox{5.5cm}{\[\Theta\left(-X-\frac{\Delta T}{M}P\right)-\Theta\left(-X\right)\]}& \parbox{1.5cm}{\[\P[-\gamma,\infty)\]} & \parbox{1.5cm}{\[- \sqrt{\frac{\Delta T}{M}}p\]} & \parbox{1.5cm}{\[-\sqrt{\frac{\Delta T}{M}}x\]} & \parbox{1.25cm}{\[\sqrt{\frac{\Delta T}{M}} \gamma\]} \\
    \hline
    \parbox{2cm}{Constant force QB} &\parbox{5.5cm}{\[\Theta\left(-X-\frac{\Delta T}{M}P+\frac{F\Delta T^2}{2M}\right)-\Theta\left(-X\right)\]} & \parbox{1cm}{\[\P[0,\infty)\]} & \parbox{3.25cm}{\[-\sqrt{\frac{\Delta T}{M}}\left(p-\frac{F\Delta T}{2}\right)\]} & \parbox{1.5cm}{\[-\sqrt{\frac{M}{\Delta T}} x\]} & \parbox{1.25cm}{\[\frac{F \Delta T }{2}\]} \\
    \hline
    \parbox{2cm}{Quantum reentry} & \parbox{5.5cm}{\[\Theta\left(l-X-\frac{t_2}{M}P\right)-\Theta\left(l-X- \frac{t_1}{M}P\right)\]} & \parbox{1.5cm}{\[\S(-\infty,0]\]} &  \parbox{2cm}{\[\sqrt{\frac{M C}{t_1}}\left(x-l\right)\]} & \parbox{3.25cm}{\[\sqrt{\frac{M}{t_1 C}}\left(l-x-\frac{t_1}{M}p\right)\]} & \parbox{1.25cm}{\[l\]}\\
    \hline
  \end{tabular}
  \caption{Some of the problems which are (anti-)metaplectically equivalent to the semi-infinite standard problem, with the same notation as in Table \ref{table: relations}. In the last row, the normalization factor of the metaplectic transformation is $C:= (t_2-t_1)/t_2$.}
  \label{table: relations2}
\end{table*}

In Table \ref{table: relations2} we present another set of quantum effects that are mathematically equivalent, not to the standard problem, but to the extended standard problem with $\alpha=\infty$, which we express, via the transformation $\sigma(x,p)=(x-\beta, p+\beta)$, as an optimization of $\Omega$ over the set of states $\S(-\infty,\beta]$.

One of these effects is a variant of quantum backflow in which the particle evolves in the presence of a constant force \cite{BMforce}. That is, with the Hamiltonian given by $H=P^2/2M -FX$. In \cite{reentrybackflow} Goussev proves that this effect is at the same time equivalent to something he calls \emph{quantum reentry}. Quantum reentry is an effect that consists in preparing a particle in $\S(-\infty,0]$, letting it evolve and then measuring a negative probability flow in some point $l \geq 0$. That is, the quantity under consideration is $-\int_{t_1}^{t_2} dt j(l,t)$ for some $t_2 > t_1 > 0$, which can again be easily transformed to the semi-infinite standard problem, as also shown in Table \ref{table: relations2}. In particular, the maximum probability transfer in both these effects is the same.

Finally, we note that the extended standard problem is equivalent to computing the maximum expression of quantum backflow when the initial momentum is in the region $[-\gamma, \infty)$ for some $\gamma \in \R$, as shown in Table \ref{table: relations2}. Thus, when the initial momentum is in this region, the probability ``backflow" acts as if there were a constant force acting on the system, since these two problems are again equivalent. This seems to have gone unnoticed by Bracken, who studied the former effect in \cite{Bracken}, despite having studied the latter in \cite{BMforce} together with Melloy.

\begin{figure*}
    \includegraphics[width=0.49\textwidth]{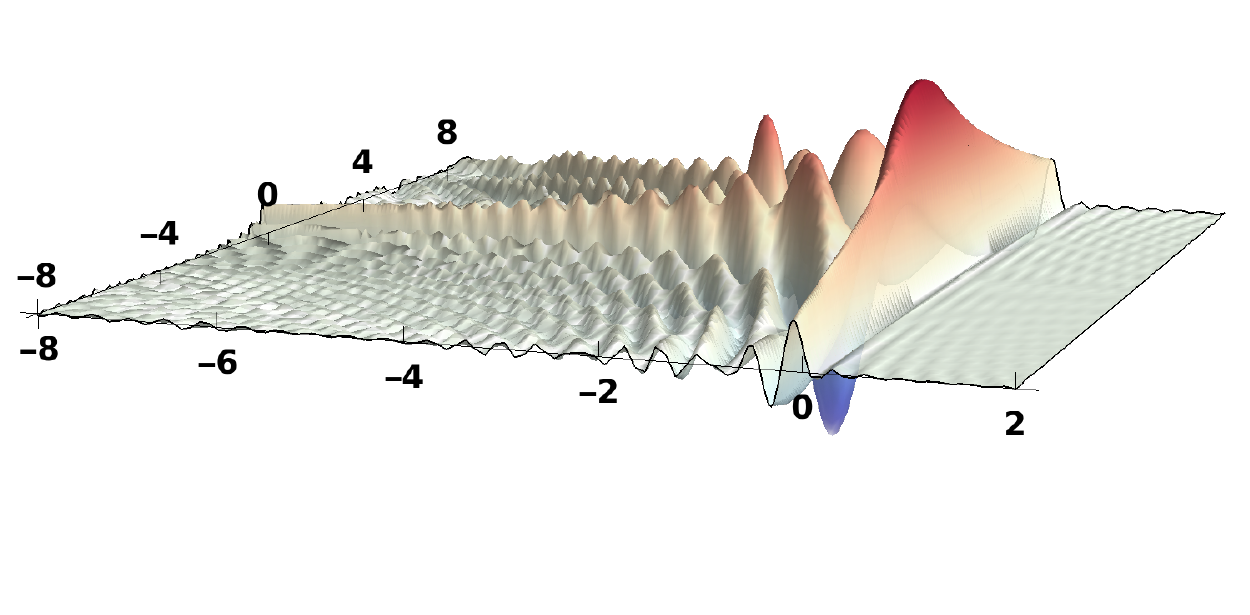}
    \includegraphics[width=0.49\textwidth]{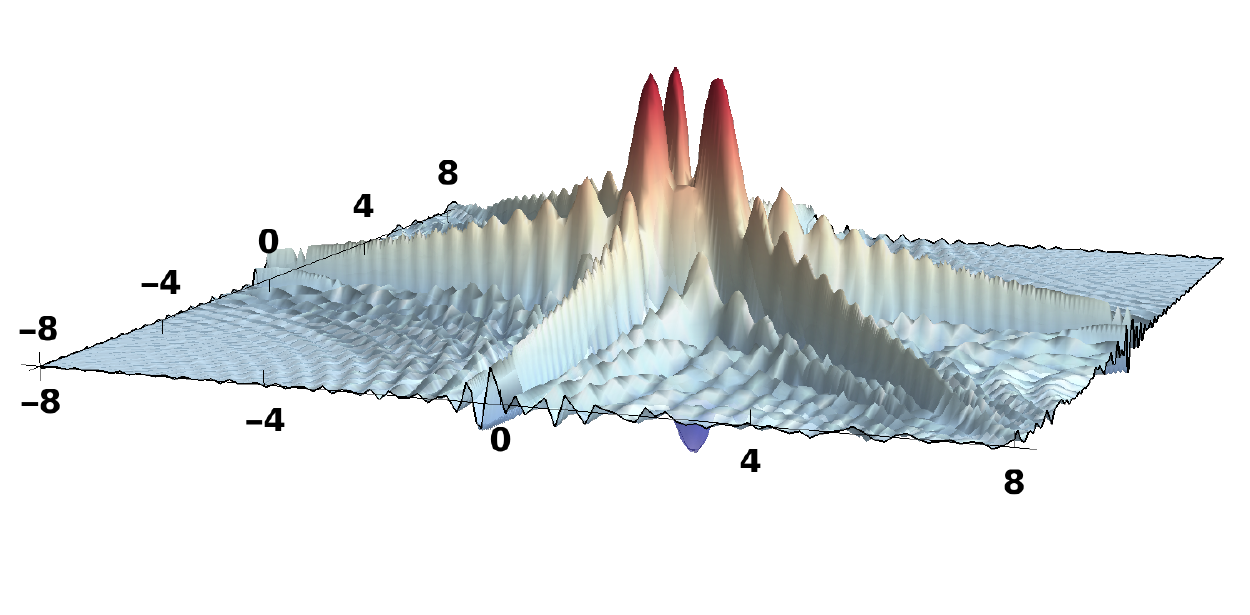}
    \caption{Wigner functions of (left) near-optimal state for the projectile scenario and (right) conjectured-optimal state for the constrained projectile scenario. Both states are obtained by truncating to the harmonic oscillator energy level $N=170$. The left state is the eigenstate of  $[\Theta(-X)]_{170}[(\Theta(X+P)-\Theta(P))]_{170}[\Theta(-X)]_{170}$ with eigenvalue $0.0331$, where $[C]_N$ denotes the restriction of the operator $C$ to the subspace spanned by the first $N+1$ number states. The right state is the eigenstate of $[\Theta(X+P)-\Theta(X)-\Theta(P)]_{170}$ with eigenvalue $0.1113$.}
    \label{fig: sameP}
\end{figure*}

\section{Classical vs. quantum rockets}
\label{rocket_sec}

The low value of $c_{bm}$ constitutes a severe obstruction to any practical application of quantum systems for transportation tasks. How to overcome this limit? A tempting idea is to consider scenarios where a transiting quantum projectile launches a second quantum projectile. Iterating this procedure, we arrive at the notion of a \emph{quantum rocket}, i.e., a quantum mechanical system that, from time to time, throws away some fuel mass in the direction opposite to the intended motion. 
Since this rocket scenario encompasses the quantum projectile scenario, its maximum quantum advantage is lower-bounded by the Bracken-Melloy constant. Furthermore, one would imagine that, should we prepare the fuel in the right quantum state, the limited quantum advantage present in quantum projectiles could be somehow bootstrapped, hence increasing the overall advantage of the quantum rocket with respect to a classical rocket whose fuel combustion has an identical momentum distribution.

Unfortunately, this is not the case, at least for a large class of quantum rockets. Consider a minimal model for a quantum rocket, where, at time $t$, the rocket itself is regarded as a $1$-dimensional particle of mass $M(t)$ and zero spin. The state of the rocket at time $t$ is therefore specified through a trace-class positive semidefinite operator $\rho(t):{\cal L}^2(\R)\to {\cal L}^2(\R)$. For most of its flight, the rocket will be propagated by the kinetic Hamiltonian $H=P_R^2/2M(t)$. At times $0=t_1<t_2<...<t_N$, though, the rocket's free evolution is interrupted: namely, at time $t_j$ the rocket burns and releases a predetermined amount of fuel $m_j$ instantaneously, thus decreasing its overall mass by the same amount.

To model the instantaneous combustion of fuel of mass $m<M$, we consider a completely positive trace-preserving (CPTP) map $\Upsilon$ that, acting on the rocket's state $\rho(t)$, returns a density matrix representing the joint state of the fuel $F$ and that of the rest of the rocket $R$, whose mass is now $M-m$, see Figure \ref{fuel_map_fig}. 

\begin{figure*}
    \includegraphics[width=0.80\textwidth]{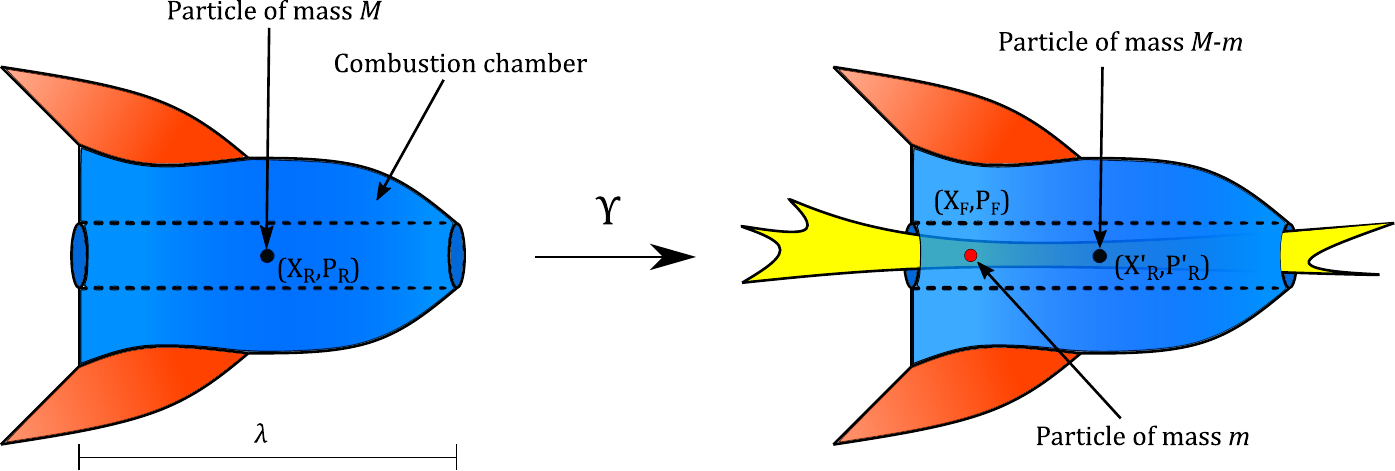}
    \caption{Action of the rocket-fuel splitting map $\Upsilon$.}
    \label{fuel_map_fig}
\end{figure*}

Call $X_F, P_F$ ($X_R, P_R$) the absolute position and momentum operators of the fuel (the rest of the rocket), and let $X_{CM},P_{CM}$ ($X_{REL},P_{REL}$) denote the canonical variables of the center of mass (the relative coordinates between systems $F$ and $R$), with:

\begin{align}
&X_{CM}=\frac{M-m}{M}X_R+\frac{m}{M}X_F, \; P_{CM}=P_R+P_F,\nonumber\\
&X_{REL}=X_F-X_R, P_{REL}=-\frac{m}{M}P_R+\frac{M-m}{M}P_F.
\label{CM_corr}
\end{align}
\noindent Let $U_{M,m}$ be the (symplectic) unitary that switches between the $R,F$ and $CM, REL$ representations and define $\omega_{CM, REL}\equiv U_{M,m}\Upsilon(\rho)U_{M,m}^\dagger$. Since $\Upsilon$ is an internal and instantaneous operation, it cannot modify the rocket's center of mass degree of freedom. This means that $\tr_{REL}(\omega)=\rho$. For $\rho=\ketbra{\psi}$, this last relation implies that $\omega=\ketbra{\psi}\otimes \sigma_{\psi}$, for some quantum state $\sigma_{\psi}$.

However, $\sigma_{\psi}$ must be independent of $\psi$. Otherwise, one could find two non-orthogonal vectors $\psi, \psi'$ with the property that $\Upsilon(\ketbra{\psi}),\Upsilon(\ketbra{\psi'})$ are more easily distinguishable than $\ketbra{\psi}, \ketbra{\psi'}$, which contradicts the contractivity of the trace norm under CPTP maps. Putting all together, we find that any rocket-fuel splitting map $\Upsilon$ must be of the form
\be
\Upsilon(\rho;\sigma,M,m)= U^\dagger_{M,m}(\rho\otimes \sigma )U_{M,m},
\label{rocket_map}
\ee
\noindent where $\sigma$ is the state of the relative system rocket-fuel. It must be noted that $\sigma$ should have been prepared in the rocket's combustion chamber. If we assume that the combustion chamber is centered in the rocket's center of mass and has length $\lambda$, then $\sigma$ must have spatial support in $[-\lambda/2,\lambda/2]$.

In describing the overall flight of the rocket, we assume that, at time $t_j$, the quantum rocket, with mass $M_j$, will release a mass $m_j$ of fuel in state $\sigma_j\in \S [-\lambda/2,\lambda/2]$ (in the relative frame of reference). Hence, the mass and state of the rocket will be instantaneously updated to $M_{j+1}= M_j-m_j$, $\rho\to \tr_F(\Upsilon(\rho;\sigma_j,M_j,m_j))$.

We consider the probability to find the rocket at time $t_{N+1}>t_N$ in the region $[a,\infty)$. This is to be compared with the maximum probability that an analog classical rocket arrives at the same region in time $t_{N+1}$. Like in the projectile scenario, this classical rocket is assumed to have, at time $t_1$, the same initial mass, initial momentum distribution and initial spatial support as the quantum one. At time $t_j$, this classical rocket will burn a mass $m_j$ of fuel, and the phase space distribution of the classical fuel in the fuel's reference frame relative to the rocket is demanded to have the same momentum distribution and spatial support as $\sigma_j$.

In these conditions, in \App E we show that the difference between the quantum and classical arrival probabilities is also limited by $c_{bm}$. This no-go result crucially relies on eq. (\ref{rocket_map}), which expresses the assumption that the fuel's interaction with the rocket is instantaneous. Physically, this corresponds to a configuration where the combustion chamber is open on both sides, i.e., the fuel is allowed to exit the rocket, not only against the rocket's direction of motion, but also towards it. Assumption eq. (\ref{rocket_map}) allows us to map the computation of the rocket's maximum quantum advantage to the standard problem (with further state constraints) through a metaplectic transformation.

\section{A transportation scenario with a quantum advantage that supersedes the Bracken-Melloy constant}
\label{variant_sec}

In view of the last result, it would be reasonable not to expect significant gaps between the arrival probabilities of quantum and classical particles. As it turns out, though, a simple variation of the way we compare classical and quantum projectiles is enough to find quantum advantages for transportation way beyond the Bracken-Melloy constant. Note that there exist known variations of the quantum backflow problem that achieve quantum advantages greater than the limit set by Bracken and Melloy \cite{new_QB, qbring, qbexperiment, qbunbounded}. Those effects are, however, unrelated to transportation tasks.

In Section \ref{class_vs_quant_proj}, we compared the behavior of a quantum projectile (or a rocket) with respect to that of a classical one with the same momentum distribution and the same spatial support at time $t=0$. Could the quantum advantage be amplified if we demanded further constraints on the initial position distribution $\mu(x)dx$ of the classical projectile, besides its support? In the extreme case, we could demand $\mu(x)dx$ to coincide with the position distribution of the quantum projectile.

Consider thus the following problem: let $\rho$ denote the density matrix of a particle of mass $M$, and let $\mu(x)dx, \nu(p)dp$ be its position and momentum distributions at time $t=0$. As before, we let the projectile evolve freely for time $\Delta T$ and then check whether the projectile is in $[a,\infty)$; call $p_q(\rho)$ the corresponding probability. How much does $p_q(\rho)$ differ from the maximum arrival probability of an analog classical particle, with initial position and momentum distributions $\mu(x)dx, \nu(p)dp$? 

The maximum classical probability of arrival is 
\begin{equation}\label{top_class}
\begin{aligned}
p^\star_c(\rho) = & \sup \int dxdp W(x,p)\Theta\left(x+p\frac{\Delta T}{M}-a\right)\\
&\text{s.t. } \quad \forall x,p, W(x,p)\geq0,\\
&\phantom{s.t. } \quad \int dpW(x,p) = \mu(x),\\
&\phantom{s.t. } \quad \int dxW(x,p) = \nu(p),\\
\end{aligned}
\end{equation}
where $W(x,p)$ represents the probability distribution of the classical particle in phase space at time $t=0$. 

The maximum quantum-to-classical advantage in this projectile scenario is therefore $\Phi^\star=\sup_{\rho\in S}\mathbb{W}(\rho)$, where $\mathbb{W}(\rho):=p_q(\rho)-p^\star_c(\rho)$. This is a nested max-min optimization problem, whose solution can be proven independent of $a, M,\Delta T$ \footnote{This can be shown by replacing $\rho$ by $U \rho U^\dagger$ in $\mathbb{W}(\rho)$, where $U$ is the metaplectic transformation $X\to\sqrt{\frac{\Delta T}{M}}X+a$, $P\to\sqrt{\frac{M}{\Delta T}}P$.}.

In \App F, $p_c^\star(\rho)$ is shown to equal $s(\infty)$, the solution of the system of ordinary differential equations
\begin{align}
    \frac{ds}{dx}=&\Theta_{+}(q)\mu(x)+(1-\Theta_{+}(q))\min\left(\mu(x),\tilde{\nu}(a-x)\right),\nonumber\\
    \frac{dq}{dx}=&\Theta_{+}(q)(\tilde{\nu}(a-x)-\mu(x))+\nonumber\\
    &(1-\Theta_{+}(q))\max\left(\tilde{\nu}(a-x)-\mu(x),0\right),
    \label{ODE_main}
\end{align}
\noindent with initial conditions $s(-\infty)=q(-\infty)=0$. Here $\Theta_{+}(z)$ is meant to be $1$ for $z>0$ and $0$ otherwise. $s(\infty)$ can be computed numerically via, e.g., Euler's explicit method.

Since we know how to compute $p^\star_c(\rho)$, one could, in principle, use gradient ascent methods to find the maximum of $\mathbb{W}(\rho)$, over all quantum states with, say, support on the space spanned by the first $N$ number basis vectors. That is, we could parametrize any such state $\rho$ as $\rho=\sum_{m,n=0}^N\rho_{m,n}\ket{m}\bra{n}$ and then follow the gradient of $\mathbb{W}(\rho)$ with respect to the variables $\rho_{m,n}$. Unfortunately, $\mathbb{W}$ is a concave function, so the method is not guaranteed to converge to the absolute maximum. Moreover, we empirically observe that, starting from a random state, projected gradient methods typically converge to very suboptimal values.

To find a suitable starting point for gradient ascent, we considered the following approach: suppose that there existed a linear operator $Z$ such that 
\begin{equation}
p^\star_c(\rho)\leq \tr(Z\rho),    
\label{special_operator}
\end{equation}
\noindent for all states $\rho$. Then we could maximize the value 
\begin{equation}
\mathbb{W}_Z(\rho):=\tr\left[\rho\left(\Theta\left(X+P\frac{\Delta T}{M}-a\right)-Z\right)\right]
\end{equation}
\noindent over all density matrices with support on the first $N$ number states. The result would provide us with a lower bound on $\Phi^\star$. In addition, if the maximizer $\rho^\star$ satisfied $\mathbb{W}_Z(\rho^\star)>0$, then that state would be a good starting point for gradient ascent.

Now, how to identify an operator $Z$ satisfying (\ref{special_operator})? Let $f, g:\R\to\R$ be two functions such that 
\begin{equation}
\Theta(x+p\Delta T/M-a)-f(x)-g(p) \leq 0,
\label{dual_condi}
\end{equation}
\noindent for all $x,p$. Then, for any distribution $W(x,p)$ in phase space with marginals $\mu(x),\nu(p)$,
\begin{align}
&\int dxdp W(x,p)\Theta(x+p\Delta T/M-a)\leq \nonumber\\
&\int dxdp W(x,p)(f(x)+g(p))=\nonumber\\
&\int dx\mu(x)f(x)+\int dp\nu(p)g(p).
\label{dual_idea}
\end{align}
\noindent It follows that the operator $Z=f(X)+g(P)$ fulfills condition (\ref{special_operator}). In fact, the dual of problem \eqref{top_class} is the maximum of the right-hand side of eq. (\ref{dual_idea}) over all such functions $f,g$.

Take $M=\Delta T=1, a=0$. We observe that the functions $f=g=\Theta$ satisfy (\ref{dual_condi}), and hence, the supremum of the spectrum of the operator $\Omega = \Theta(X+P)-\Theta(P)-\Theta(X)$ provides us with a lower bound for $\Phi^\star$, as $\tr(\rho\Omega) \geq \Phi^*$.

If we truncate this operator in the number basis, we are looking at the maximum eigenvalue of the matrix $({\cal M}^{(N)}_{nm}:n,m=0,...,N)$, with
\[
{\cal M}^{(N)}_{nm}=\bra{n}\left(\Theta(X+P)-\Theta(X)-\Theta(P)\right)\ket{m},
\]
For $N=170$, the maximum eigenvalue of this matrix is $0.1113$: the reader can find a plot of the Wigner function of the corresponding eigenvector in Figure \ref{fig: sameP} (right). Taking $N=1700$, we obtain the tighter bound $\Phi^\star\geq 0.1228$. The maximum quantum advantage in this projectile scenario is therefore substantially greater than the conjectured value of $c_{bm}$, or even its upper bound $0.0725$, derived in section \ref{solving_standard_sec}.

Applying gradient methods on those states to improve their $\mathbb{W}$ value proved to be tricky, though. Call $\rho^\star$ the state  corresponding to the eigenvector of ${\cal M}^{(N)}$. We observe that, even for low values of $N$ (say, $N=30$), we need to use a very small step size in eq. (\ref{ODE_main}) to estimate $p^\star_c(\rho^\star)$ precisely. When we do so, we find that $p^\star_c(\rho^\star)\approx \tr\{\rho^\star(\Theta(X)+\Theta(P))\}$: that is, for such quantum states, our upper bound (\ref{special_operator}) on $p^\star_c$ is (approximately) tight. Around the eigenvectors of ${\cal M}^{(N)}$, the gradient of $\mathbb{W}$ explodes, possibly because the function is not everywhere differentiable. Using random perturbations of $\rho^\star$ as a seed, projected gradient methods only produced states with a objective value slightly smaller than $\mathbb{W}(\rho^\star)$.

From all the above, it is thus natural to conjecture that the obtained value of $0.1228$ is (close to) a local maximum of $\mathbb{W}$, at least among quantum states with support in $\{\ket{n}:n=0,...,1700\}$. 

On the other hand, note that after a suitable metaplectic transformation the problem $\sup_{\rho}\tr(\rho \Omega)$ becomes $\sup_{\rho}\tr(\rho  \tilde{\Omega})$, where
\[
\tilde{\Omega} = \id - \sum_{k=0}^2 \Theta (X_k) = -\frac{1}{2} \id - \frac{3}{2}\left(\frac{1}{3} \sum_{k=0}^2 \text{sgn}(X_k) \right)
\]
with $X_k := \text{cos}(2\pi k /3) X + \text{sin}(2\pi k /3) P$. The operator $\sum_{k=0}^2 \text{sgn}(X_k)/3 $ is the one studied by Tsirelson in \cite{tsirelson}. The best known bounds for its spectrum are given in \cite{valerio}. Using Equation (D20) in \cite{valerio}, one obtains that $\Phi^* \geq -0.5 + 1.5 \times \sqrt{0.17491} = 0.1262$. In particular, this shows how unreliable the numerical estimation of these quantities is, even after using a basis with $1700$ number states, and thus the importance of getting good upper bounds as well as lower bounds.



\section{Conclusion}
\label{conclusion_sec}

In this letter, we have investigated how the dynamics of quantum and classical projectiles differ, using the probability of arrival at a distant region of space as a figure of merit. We found that non-relativistic quantum particles can arrive at a distant region with higher or lower probability than any classical particle with the same initial spatial support and momentum distribution. Curiously enough, the maximum gap between quantum and classical probabilities is independent of the distance to the arrival region, and just depends on the mass $M$ and spatial support $L$ of the projectile and its flying time $\Delta T$ through the single parameter $\alpha=ML^2/\Delta T$. 

The discrepancy between the quantum and classical arrival probabilities is, however, limited by the Bracken-Melloy constant $c_{bm} \approx 0.0384517$. As we showed, the maximum quantum advantage of rockets with an open combustion chamber is also bounded by this value. Our no-go result does not apply, however, to rockets with a $1$-side closed combustion chamber, which just allows the fuel to exit the rocket opposite to its direction of motion. Whether such rocket models are also limited by $c_{bm}$, or on the contrary, they can achieve arrival probabilities much higher than classical is an interesting topic for future research.

In a similar direction, we showed that considerable quantum-classical gaps of at least $0.1262$ can be observed if we demand classical projectiles to reproduce the initial position distribution of the quantum projectile. It is an open problem whether this figure is indeed close to the maximum quantum advantage, and whether this effect can be exploited for real transportation tasks.

\vspace{10pt}
\noindent\emph{Acknowledgements}

We thank Valerio Scarani for interesting discussions, Reinhard Werner for pointing us to reference \cite{Werner_1988}, and Zaw Lin Htoo for pointing to us that using (D20) of \cite{valerio} one can slightly improve our numerical bound of Section V. D.T. is a recipient of a DOC Fellowship of the Austrian Academy of Sciences at the Institute of Quantum Optics and Quantum Information (IQOQI), Vienna. T.P.L is supported by the Lise Meitner Fellowship of the Austrian Academy of Sciences (project number M 2812-N).

\vspace{10pt}
\noindent\emph{Competing Interests}
The Authors declare no Competing Financial or Non-Financial Interests.

\vspace{10pt}
\noindent\emph{Author Contributions}
All authors contributed equally to the research and writing.

\vspace{10pt}
\noindent\emph{Data availability}
There is no data set in this work.

\vspace{10pt}
\noindent\emph{Disclaimer}
This version of the article has been accepted for publication, after peer review but is not the Version of Record and does not reflect post-acceptance improvements, or any corrections. The Version of Record is available online at: https://doi.org/10.1038/s41534-023-00739-z.

\bibliography{biblios2}

\newpage

\section{Appendix}

\begin{appendix}

In this Appendix we perform some of the lengthy computations that give the claims of the main text. It is organized as follows. In section A, we review the relevant properties of the Wigner function. In section B we give an explicit formula for $\varphi(\alpha)$ and show how to numerically approximate it for finite $\alpha$. In section C we compute a numerical lower bound for $\varphi(\infty)$. In section D we compute a numerical upper bound for $\varphi(\infty)$. Section E proves that there is no advantage in our model of the quantum rocket with respect to a single projectile. Finally, in section F we describe the numerical methods used in the restricted projectile scenario.

\section{Notes on the Wigner function}
\label{wigner_app}

The Wigner function of a quantum state $\rho$ is
\[
W_\rho (x,p):=\frac{1}{2\pi}\int_{-\infty}^{\infty} dy \bra{x-\frac{y}{2}}\rho \ket{x+\frac{y}{2}} e^{ipy}.
\]
This is a partial Fourier transform on the function $\bra{x-y/2}\rho \ket{x+y/2}$, and as such is defined with the usual density arguments from the states $\rho$ such that $\bra{x}\rho \ket{y}$ is a Schwarz function of two variables.

We now study the action of metaplectic transformations on the Wigner function. Suppose then that $\sigma:\R^2 \rightarrow \R^2$ is affine-linear, and $[\sigma(X,P)_1,\sigma(X,P)_2]=[X,P]$. Then, calling $\tilde{\sigma}$ the linear part of $\sigma$, we must conclude that $\tilde{\sigma} \in SL_2(\R)$. It is well known that $SL_2(\R) = Sp_2(\R)$, which gives rise to the name \emph{metaplectic} that we have used in the main text. Furthermore, the KAN decomposition of $SL_2(\R)$ is
\[
SL_2(\R) = SO_2(\R) \cdot \begin{pmatrix} \mu & 0 \\ 0 & 1/\mu \end{pmatrix} \cdot  \begin{pmatrix} 1 & \nu \\ 0 & 1 \end{pmatrix}.
\]
That is, every matrix decomposes as a product of a rotation, a dilation and a translation. These all correspond to time evolutions of quadratic Hamiltonians ($P^2+X^2$, $XP+PX$ and $P^2$ or $X^2$, respectively). Finally, the affine part of the map can be realized by time-evolving with the Hamiltonians $X$ and $P$. These six Hamiltonians thus give rise to the unitaries $U_\sigma$ mentioned in the main text.

On the other hand, since all such Hamiltonians are at most quadratic in momentum and position, the time evolution of the Wigner function must satisfy the Liouville equation \cite{quadratic1, quadratic2}, i.e., it must evolve classically in phase space. Therefore,
\[
W_\rho(\sigma^{-1}(x,p)) = W_{U_\sigma \rho U_\sigma^\dagger}(x,p),
\]
as the main text claims.

If, on the other hand, $[\sigma(X,P)_1, \sigma(X,P)_2]=-[X,P]$, then there is an extra matrix 
\[
\begin{pmatrix} 1 & 0 \\ 0 & -1 \end{pmatrix}
\]
in the KAN decomposition of the linear part of $\sigma$. This operator corresponds to the antiunitary map $\rho \mapsto \rho^*$, where $*$ denotes complex conjugation in the position basis. Indeed, a short computation shows that
\[
W_{\rho^*}(x,p) = W_{\rho}(x,-p).
\]
Given an operator $\Omega$, we define its Wigner function as 
\[
W_\Omega(x,p) := \int_{-\infty}^{\infty} dy \bra{x-\frac{y}{2}}\Omega \ket{x+\frac{y}{2}} e^{ipy}.
\]
With these choices of normalization, a short computation shows that
\[
\tr(\rho \Omega) = \int_{\R^2} dx dp W_\rho(x,p) W_\Omega(x,p).
\]

In the main text we are primarily concerned with operators of the form $f(aX+bP+c)$ for some bounded measurable function $f$. We now prove that
\begin{equation} \label{eq_wigner}
\tr(\rho f(aX+bP+c)) = \int_{\R^2} dx dp f(ax+bp+c) W_{\rho}(x,p).
\end{equation}

Since $f$ is bounded and measurable, we have that (as a tempered distribution) it has an inverse Fourier transform $\hat{f}$, and we may write $f(x)=\int_\R dt \hat{f}(t)e^{ixt}$. Via functional calculus, we thus have
\begin{align*}
f(aX+bP+c) &= \int_\R dt \hat{f}(t) e^{it(aX+bP+c)}\\ 
& = \int_\R dt \hat{f}(t)e^{itbP}e^{itaX}e^{-it^2ab/2}e^{itc},
\end{align*}
where we have used the Baker-Campbell-Haussdorf formula $e^{i(\xi X+\zeta P)}=e^{i\zeta P}e^{i\xi X}e^{-i\frac{\xi \zeta}{2}}$. A straightforward computation now shows that
\[
\bra{x-y/2}f(aX+bP+c)\ket{x+y/2} = \int_\R dt \hat{f}(t) e^{it(ax+bp+c)},
\]
from which we conclude the result.

Since it will be useful soon, we next compute the Wigner function of the operator $\ketbra{m}{n}$, in number basis, i.e., $(X+iP)\ket{n}=\sqrt{2n}\ket{n-1}$. First note that, by linearity of the Wigner function, for any state $\rho=\sum_{m,n}\rho_{mn}\ketbra{m}{n}$, we have that
\begin{equation}\label{eq:Wmn}
    W_\rho(x,p)=\sum\rho_{mn}W_{\ketbra{m}{n}}(x,p).
\end{equation}

Now, take $\rho$ to be a coherent state, i.e., $\rho=\ketbra{\alpha}{\alpha}$, with
\begin{equation}
\ket{\alpha}=e^{-|\alpha|^2/2}\sum_{k=0}^\infty\frac{\alpha^k}{\sqrt{k!}}\ket{k}.
\end{equation}
It follows that
\begin{equation}
    \rho_{mn}=e^{-\abs{\alpha}^2}\frac{\alpha^m\bar\alpha^n}{\sqrt{m!n!}}.
\end{equation}
On the other hand, the Wigner function of a coherent state is known to be \cite{gaussian}
\begin{equation}\label{eq:Wcoh}
W_{\rho}(x,p)=\frac{1}{\pi}e^{-r^2-2|\alpha|^2+\sqrt{2}(\alpha(x-ip)+\bar\alpha(x+ip))},
\end{equation}
with $r^2=x^2+p^2$. Cancelling the factor $e^{-\abs{\alpha}^2}$ in both sides of \eqref{eq:Wmn} and expanding the remaining exponential in \eqref{eq:Wcoh} as a power series in $\alpha,\bar\alpha$, we can compare the coefficients multiplying $\alpha^m\bar\alpha^n$ on both sides of the resulting equation, thus obtaining

\begin{align}\label{wigner_mn}
&W_{\ket{m}\bra{n}}(x,p)= \\
&\frac{\sqrt{m!n!}}{\pi}e^{-r^2}\sum_{k=0}^{\min(m,n)}\frac{(-1)^k}{k!}\frac{(\sqrt{2}r)^{m+n-2k}}{(m-k)!(n-k)!}e^{i\theta(n-m)}, \nonumber \\
&\theta=\arg(x+ip). \nonumber
\end{align}

Note that in \cite{tsirelson} Tsirelson provides the complex conjugated formula for the same quantity. This mistake does not, however, invalidate the main result of \cite{tsirelson}, namely, the computation of the spectrum of a given linear operator. This is so because the spectra of a self-adjoint operator and its complex conjugate in a given basis coincide.

Next, we invoke \eqref{wigner_mn} to derive the matrix elements $\mathcal{O}_{nm}(\phi):=\bra{n}\Theta(\cos(\phi)X+\sin(\phi)P)\ket{m}$ and show that
\begin{widetext}
\begin{align}\label{theta_exp}
&\mathcal{O}_{nm}(\phi)=\frac{\sqrt{m!n!}}{\pi}\frac{e^{i\phi(n-m)}(i^{n-m}-i^{m-n})}{i(n-m)}\sum_{k=\max(m,n)}^{m+n}\frac{(-1)^{m+n-k}2^{k-\frac{m+n}{2}-1}\Gamma\left(k-\frac{m+n}{2}+1\right)}{(m+n-k)!(k-n)!(k-m)!}.
\end{align}
\end{widetext}
\noindent We will use this expression in Appendices \ref{low_bm_app}, \ref{variant_app} to lower bound the maximum quantum advantage in the standard and restricted projectile scenarios.

To begin, from eq. (\ref{eq_wigner}) we have that
\begin{equation}
    \mathcal{O}_{nm}(\phi) = \int dx dp W_{\ketbra{m}{n}}(x,p)\Theta(x\cos\phi+p\sin\phi).
\end{equation}
\noindent We can evaluate the right-hand side of the above equation by changing to polar coordinates. The result is
\begin{equation}
{\cal O}_{nm}(\phi)=\frac{\sqrt{m!n!}}{\pi}w_{nm}\frac{e^{i\phi(n-m)}(i^{n-m}-i^{m-n})}{i(n-m)},
\label{theta_interm}
\end{equation}
\noindent with
\begin{align}
w_{nm}&:=\sum_{k=0}^{\min(m,n)}\frac{(-1)^k2^{\frac{m+n}{2}-k-1}\Gamma\left(\frac{m+n}{2}-k+1\right)}{k!(m-k)!(n-k)!}\nonumber\\
&=\sum_{k=\max(m,n)}^{m+n}\frac{(-1)^{m+n-k}2^{k-\frac{m+n}{2}-1}\Gamma\left(k-\frac{m+n}{2}+1\right)}{(m+n-k)!(k-n)!(k-m)!},
\end{align}
\noindent where, in the last step, we changed the sum variable $k\to m+n-k$ so that a comparison with eq. (1.5) in \cite{tsirelson} can be made. 

As it turns out, the final expression for $w_{nm}$ can be written in terms of the generalized hypergeometric function $\phantom{}_pF_q$. Thanks to such an identity, we were able to compute $w_{nm}$ accurately for large values of $m,n$.

\section{Properties of $\varphi(\alpha)$}
\label{varphi_app}
First, we will calculate the kernel of $\Omega$ in position representation. Our starting point is the identity
\begin{equation}
\mbox{sign}(A)=\frac{1}{i\pi}\int \frac{dt}{t}e^{itA},    
\end{equation}
\noindent where the integral must be understood as a Cauchy principal value. Thus we have that
\begin{align}
&\mbox{sign}(P+X)=\frac{1}{i\pi}\int \frac{dt}{t}e^{it(P+X)}\nonumber\\
&=\frac{1}{i\pi}\int \frac{dt}{t}e^{itP}e^{it X}e^{-i\frac{t^2}{2}}\nonumber\\
&=\frac{1}{i\pi}\int \frac{dt}{t}dxdydp \ket{x}\braket{x}{p}e^{itp}\braket{p}{y}\bra{y}e^{it y}e^{-i\frac{ t^2}{2}}\nonumber\\
&=\frac{1}{i\pi}\int \frac{dt}{t}dxdydp \frac{e^{ip(t-y+x)}}{2\pi}e^{it y}e^{-i\frac{t^2}{2}}\ket{x}\bra{y}\nonumber\\
&=\frac{1}{i\pi}\int \frac{dt}{t}dxdy \delta(t-y+x)e^{it y}e^{-i\frac{t^2}{2}}\ket{x}\bra{y}\nonumber\\
&=\frac{1}{i\pi}\int dxdy \frac{e^{\frac{i}{2} (y^2-x^2)}}{y-x}\ket{x}\bra{y}.
\end{align}
\noindent To arrive at the final expression, we invoked the Baker-Campbell-Haussdorf formula $e^{i(\xi X+\zeta P)}=e^{i\zeta P}e^{i\xi X}e^{-i\frac{\xi \zeta}{2}}$ in the first line; the resolution of the identity $\id=\int dx \ketbra{x}=\int dx \ketbra{y}$, in the second one; the relation $\braket{x}{p}=e^{ipx}/\sqrt{2\pi}$ (assuming that the bra is an element of the position basis; and the ket, of momentum basis), in the third one; and the relation $\int dp e^{ips}=2\pi\delta (s)$, in the fourth one.
    
Using the same techniques, one finds that
\be
\mbox{sign}(P)=\frac{1}{i\pi}\int dxdy \frac{1}{y-x}\ketbra{x}{y}.
\ee
Hence we have that
\begin{align}
    &\Omega=\frac{1}{2}\left(\mbox{sign}(X+P)-\mbox{sign}(P)\right)=\nonumber\\
    &\frac{1}{2\pi i}\int dxdy \frac{e^{\frac{i}{2} (y^2-x^2)}-1}{y-x}\ketbra{x}{y}.
\end{align}
This expression can be further reduced to a real kernel by conjugating it with the unitary $e^{\frac{i}{4}X^2}$, which results in the operator
\be
\tilde{\Omega}=\frac{1}{4\pi}\int dxdy (x+y)\mbox{sinc}\left(\frac{1}{4} (y^2-x^2)\right)\ketbra{x}{y},
\label{tilde_Omega}
\ee
where $\mbox{sinc}(z):= sin(z)/z$.

\subsubsection{Bounding $\varphi(\alpha)$}

We start from the easily verifiable identity:
\be
\mbox{sinc}(y)=\frac{1}{2}\int_{-1}^1e^{i\omega y}d\omega.
\label{id_sinc2}
\ee
Applying the identity to eq. (\ref{tilde_Omega}), we find that
\be
\tilde{\Omega}=\frac{1}{2}\int_{-1}^1 d\omega A_\omega,
\label{decomp}
\ee
with
\be
A_\omega=\frac{1}{4\pi}\int dxdy\ketbra{x}{y}(x+y)e^{i\omega\frac{y^2-x^2}{4}}.
\ee
\noindent Define $S(\alpha):={\cal S}([-\sqrt{\alpha},0])$. Note that $A_\omega=U_\omega A_0 U_{\omega}^\dagger$, where the unitary $U_\omega=e^{-\frac{i}{4}\omega X^2}$ leaves ${\cal S}(\alpha)$ invariant. Now, by eq. (\ref{decomp}), we have that
\begin{align}
\sup_{\rho\in S(\alpha)} \tr\left(\rho\tilde{\Omega}\right) &\leq \frac{1}{2}\int_{-1}^1 d\omega \sup_{\rho\in S(\alpha)}\tr\left(\rho A_\omega\right)\nonumber\\
&= \sup_{\rho\in S(\alpha)}\tr\left(\rho A_0\right).
\end{align}

On the other hand, when averaged over elements of $S(\alpha)$, $A_0$ has support on a two-dimensional subspace, namely, the span of the vectors
\be
\ket{\psi_0}=\int_{[-\sqrt{\alpha},0]}dx\ket{x},\ket{\psi_1}=\int_{[-\sqrt{\alpha},0]}xdx\ket{x}.
\ee
The maximum eigenvalue of $A_0$ is therefore the result of solving the generalized eigenvalue problem $\min\{\lambda:\lambda G-F\geq0\}$ with $2\times 2$ matrices $F,G$ given by
\be
F_{jk}=\braket{\psi_j}{\psi_0}\braket{\psi_1}{\psi_k}+\braket{\psi_j}{\psi_1}\braket{\psi_0}{\psi_k}, G_{jk}=\braket{\psi_j}{\psi_k}.
\ee
The result is the upper bound on $\varphi(\alpha)$
\be
\varphi(\alpha)\leq\frac{(2\sqrt{3}-3)}{24\pi}\alpha.
\label{linear_upper}
\ee
As shown in Figure 2 (main text), this analytic (and linear) bound is very good for small values of $\alpha$. 

To arrive at better approximations for $\varphi(\alpha)$, we will exploit the fact that the integrand in (\ref{tilde_Omega}) is an analytic function; and the spatial support of the states in $S(\alpha)$, finite. Consider an operator of the form
\be
O:= \int_{[-\sqrt{\alpha},0]^2}dxdzf(x,z)\ketbra{x}{z},
\ee
with $f(x,z)$ analytic in $[-\sqrt{\alpha},0]^2$. Then, for any $\epsilon>0$, one can find $N$ such that the $N^{th}$-order Taylor expansion $f_N(x,z)=\sum_{n,m=0}^Nf_{m,n}x^mz^n$ of $f(x,z)$ satisfies $|f_N(x,z)-f(x,z)|<\epsilon$, for $x,z\in [-\sqrt{\alpha},0]$. Define thus the operator
\be
O_N:=\int_{[-\sqrt{\alpha},0]^2}dxdzf_N(x,z)\ketbra{x}{z},
\ee
and let $\ketbra{\psi}\in S(\alpha)$. Then,
\begin{align}
&|\bra{\psi}(O-O_N)\ket{\psi}|\nonumber\\
&\leq\int_{[\sqrt{\alpha},0]^2}dxdz|f(x,z)-f_N(x,z)||\psi(x)||\psi(y)| \nonumber\\
&\leq\epsilon\int_{[-\sqrt{\alpha},0]^2}|\psi(x)||\psi(y)|=\epsilon|\braket{|\psi|}{\psi_0}|^2\leq\epsilon\sqrt{\alpha},
\end{align}
where $\ket{|\psi|}$ denotes the normalized state with wave-function $|\psi(x)|$. The maximum eigenvalue of $O_N$ is therefore an $\epsilon\sqrt{\alpha}$-approximation to the top of the spectrum of $O$. Note, as we did in deriving the upper bound on $\phi(\alpha)$, that $O_N$ has support on the finite set of vectors
\be
\Psi_N:=\left\{\int_{[-\sqrt{\alpha},0]} dxx^k\ket{x}, k=0,...,N\right\},
\ee
hence in principle we can diagonalize it exactly. In practice, though, the Gram matrix of the vectors in $\Psi_N$ is ill-conditioned, so, with computer precision, one can just diagonalize $O_N$ reliably for low values of $N$. To overcome this difficulty, we exploit the properties of the Legendre polynomials \cite{abramowitz+stegun}.

The Legendre polynomials $P_n(x)$ are o
rthogonal in the interval $[-1,1]$ with respect to the weight $w(x)=1$. It follows that $P_n(1+2x/\sqrt{\alpha})$ are orthogonal in the interval $[-\sqrt{\alpha},0]$. Invoking the formula for the scalar product of Legendre polynomials \cite{abramowitz+stegun}, and taking into account the compression $[-1,1]\to [-\sqrt{\alpha},0]$, we have that
\be
\int_{[-\sqrt{\alpha},0]} dx P_n\left(1+\frac{2x}{\sqrt{\alpha}}\right)P_m\left(1+\frac{2x}{\sqrt{\alpha}}\right)=\frac{\sqrt{\alpha}}{2n+1}\delta_{m,n}.
\ee
In addition, Legendre polynomials satisfy the recurrence relation \cite{abramowitz+stegun}
\begin{align}
xP_n\left(1+\frac{2x}{\sqrt{\alpha}}\right)=&\frac{\sqrt{\alpha}(n+1)}{2(2n+1)}P_{n+1}\left(1+\frac{2x}{\sqrt{\alpha}}\right)\nonumber\\
&-\frac{\sqrt{\alpha}}{2} P_{n}\left(1+\frac{2x}{\sqrt{\alpha}}\right)+\nonumber\\
&+\frac{\sqrt{\alpha}n}{2(2n+1)}P_{n-1}\left(1+\frac{2x}{\sqrt{\alpha}}\right).
\end{align}
From the identity $P_0\left(1+2x/\sqrt{\alpha}\right)=1$, we hence arrive at a simple expression for the expansion of the monomials $\{x^k:k=0,...,N\}$ in the polynomial basis ${\cal B}_N=\{P_n\left(1+2x/\sqrt{\alpha}\right):n=0,...,N\}$:
\be
\sum_{n=0}^N\bra{n}\hat{X}^k\ket{0}P_n\left(1+\frac{2x}{\sqrt{\alpha}}\right),
\ee
where $\hat{X}$ is the $N+1\times N+1$ matrix with rows and columns numbered from $0$ to $N$ and non-zero coefficients
\begin{align}
\hat{X}_{n+1,n}=\frac{\sqrt{\alpha}(n+1)}{2(2n+1)},\hat{X}_{n,n}=-\frac{\sqrt{\alpha}}{2},\hat{X}_{n-1,n}=\frac{\sqrt{\alpha}n}{2(2n+1)}.
\end{align}
Call $\hat{G}$ the Gram matrix of the basis functions ${\cal B}_N$, i.e., $\hat{G}_{mn}=\frac{\sqrt{\alpha}}{2n+1}\delta_{m,n}$. From the above it follows that 
\begin{align}
\hat{F}_{mn}&:=\int_{[0,1]^2}dxdyf_N(x,y)P_m\left(1+\frac{2x}{\sqrt{\alpha}}\right)P_n\left(1+\frac{2x}{\sqrt{\alpha}}\right)\nonumber\\
&=\bra{m}\hat{G}\left(\sum_{j,k=0}^Nf_{jk}\hat{X}^j\ketbra{0}{0}(\hat{X}^k)^T\right)\hat{G}\ket{n}.
\end{align}

Diagonalizing $O_N$ thus entails solving the generalized eigenvalue problem
\begin{align}
\max &\bra{\Psi}\hat{F}\ket{\Psi}\nonumber\\
\mbox{s.t. }&\bra{\Psi}\hat{G}\ket{\Psi}=1.
\end{align}
Defining $\ket*{\hat{\Psi}}=\hat{G}^{1/2}\ket{\Psi}$, we find that our $\epsilon$-approximation to the maximum eigenvalue of $O$ is the maximum eigenvalue of the matrix $\hat{G}^{-1/2}\hat{F}\hat{G}^{-1/2}$.

Let us apply these considerations to the operator (\ref{tilde_Omega}). In this case, $f(x,z)=\frac{1}{4\pi}(x+y)\mbox{sinc}\left(\frac{(y^2-x^2)}{4}\right)$. By Taylor's remainder theorem, we have that, for any $z$, 
 \be
 \sin(z)=\sum_{k=0}^N\frac{(-1)^{k}}{(2k+1)!}z^{2k+1}+\frac{\sin^{(2N+3)}(\xi)}{(2N+2)!}\xi^{2N+2}z,
 \ee
for some $\xi\in [0,z]$. It follows that, for $x,y\in [-\sqrt{\alpha},0]$, the following relation holds:
\begin{align}
&\left| \frac{1}{4\pi} (x+y)\mbox{sinc}\left(\frac{y^2-x^2}{4}\right)\right.\nonumber\\
&\left. -\frac{1}{4\pi}(x+y)\sum_{k=0}^N\frac{(-1)^k}{(2k+1)!}\left(\frac{y^2-x^2}{4}\right)^{2k}\right|\nonumber\\
&\leq \frac{\sqrt{\alpha}}{2\pi}\frac{1}{(2N+2)!}\left(\frac{\alpha}{4}\right)^{2N+2}=\epsilon.
\end{align}
Using Stirling's approximation $\ln(n!)\approx n\ln(n)-n$, we conclude that, in order to compute $\varphi(\alpha)$ up to error $\delta\equiv \epsilon\sqrt{\alpha}$, the following condition must be fulfilled:
\be
(2N+2)(\ln(2N+2)-1-\ln(\alpha))\gtrapprox O\left(\ln\left(\frac{1}{\delta}\right)\right)+O(\ln(\alpha)).
\ee
A sufficient condition to satisfy this relation is that $\ln(2N+2)-1-\ln(\alpha)\geq 1$ and $(2N+2)\geq \ln\left(1/\epsilon\right)+O(\ln(\alpha))$.

\section{Lower bounds on $\varphi(\infty)=c_{bm}$}
\label{low_bm_app}
To improve the lower bounds on $c_{bm}$ obtained through the exact computation of $\varphi(\alpha)$ for high values of $\alpha$, we will follow a variational approach. Recall that $c_{bm}$ is the result of maximizing $\tr(\Omega\rho)$ over all quantum states $\rho\in {\cal S}(-\infty,0]$. Hence, any quantum state satisfying this constraint gives a lower bound on $c_{bm}$. For any $\rho$, we can enforce this constraint by projection:
\begin{equation}
\hat{\rho}:=\frac{\Theta(-X)\rho \Theta(-X)}{1-\epsilon} \in {\cal S}(-\infty,0]
\end{equation}
with $\epsilon = 1 - \tr(\Theta(-X)\rho)$. Using $\|\Omega\|_\infty\leq 1$, it is easy to prove that, for $\rho=\ketbra{\psi}{\psi}$,
\begin{equation}
\tr(\hat{\rho}\Omega)\geq \frac{\tr(\rho\Omega)}{1-\epsilon}-2\sqrt{\frac{\epsilon}{1-\epsilon}}-\frac{\epsilon}{1-\epsilon}.
\label{epsilon_corr}
\end{equation}
which provides a way to lower bound $c_{bm}$, given an arbitrary quantum state not necessarily in ${\cal S}(-\infty,0]$.

Consider, thus, a state $\rho$ with support in $\H_N=\mbox{span}\{\ket{n}:n=0,\dots,N\}$. The restrictions of the operators $\Omega=\Theta(X+P)-\Theta(P)$ and $\Theta(-X)$ to $\H_N$ can be computed through eq. (\ref{theta_exp}). Taking $N=1000$, we find, via matrix diagonalization, the pure state $\ket{\psi}\in\H_N$ maximizing the overlap
\begin{equation}
\bra{\psi}\left(\Omega_N+\lambda\Theta(-X)_N\right)\ket{\psi},
\end{equation}
\noindent with $\lambda=2500$. Defining $\rho^\star:=\ketbra{\psi}{\psi}$, we compute the averages $\tr(\rho\hat{\Omega})$, $\tr(\rho\Theta(-X))$ and, applying eq. (\ref{epsilon_corr}), we find that $c_{bm}\geq 0.0315$.

\section{Upper bounds on $c_{bm}$}
\label{up_bm_app}

As explained in the main text, the problem of upper bounding $c_{bm}$ is equivalent to that of lower bounding the bottom of the spectrum of the operator $A$ defined through $\tr(\rho A) = \int_{\R^2} \theta(x)\theta(p) W_\rho(x,p)$, constrained to the space ${\cal Q}$ of wave-functions $\ket{\psi}$ satisfying $\Theta(X+P)\ket{\psi}=\ket{\psi}$.

Restricted to this space, $A = A + B$, for any operator $B$ that integrates a Wigner function on some region $R \subset \{(x,p) \in \R^2 : x+p\leq 0\}$. Therefore, 
\[
\sup_B\inf\{\lambda : \lambda \in \sigma(A+B)\}  \leq \inf\{\lambda : \lambda \in \sigma(A|_{{\cal Q}})\}.
\]
Unfortunately, computing integrals of Wigner functions on arbitrary regions of phase space is arbitrarily complicated, so we must restrict ourselves to tractable regions.

Define $R_k := \{(x,p) \in \R^2 : xp \geq k, x\leq 0, p \leq 0\}$. Such hyperbolic regions are invariant under the action of the dilation group $e^{it(XP+PX)}$, and it turns out that the operator $B_k$ representing integration over $R_k$ can be block-diagonalized in a basis $\{\ket{\eta}_{+},\ket{\eta}_{-}\}_\eta$ of dilation eigenvectors, exactly like Werner does for $B_0$ in \cite{Werner_1988}. The hyperbolic regions were independently considered in full generality in \cite{wood}, where the spectrum is also numerically computed. The result can only be expressed as follows in terms of integrals which do not have an analytical expression, as far as we are aware:
\[
B_k = \int_{-\infty}^\infty d\eta \sum_{m,n=+,-}K^k_{mn}(\eta)\ket{\eta}_{m}\bra{\eta}_n,
\]
where
\begin{align*}
K_{++}(\eta) &:=0,\\
K_{-+}^k(\eta) &:= \frac{1}{2\pi i}\int_{0}^\infty dx e^{i\eta x}\frac{e^{-2ki\text{Coth}(x)}}{\text{Cosh}(x)},\\
K_{+-}^k(\eta) &:=\overline{K_{+-}^k(\eta)},\\
K_{--}^k(\eta) &:= \lim_{\varepsilon\rightarrow0} \int_{-\infty}^\infty dx e^{i\eta x}\frac{e^{-2ki\text{Tanh}(x)}}{\varepsilon \text{Cosh}(x)+2i\text{Sinh}(x)}
\end{align*}

As proven in \cite{Werner_1988}, the operator $A$ is block-diagonalized by the same unitary transformation. Setting $B=\sum_k b_kB_k$, we thus have that the bottom of the spectrum of $\tilde{A}:=A+B$ equals $\inf_\eta\{\lambda_{\min}(\tilde{A}(\eta))\}$, where $\lambda_{\min}(Z)$ denotes the minimum eigenvalue of the matrix $Z$ and $\{\tilde{A}(\eta):\eta\in\R\}$ is a one-parameter of $2\times 2$ matrices.

In this regard, the best combination we could find before the integrals defining the entries of $\tilde{A}(\eta)$ became too numerically unstable to be reliable is
\[
\tilde{A}:= A + 0.7673B_0 - 0.8767B_{0.1} + 0.09895B_{0.5},
\]
whose spectrum as a function of $\eta$ is shown in Figure \ref{fig: spectrum}.


\begin{figure}
    \centering
    \includegraphics[scale = 0.7]{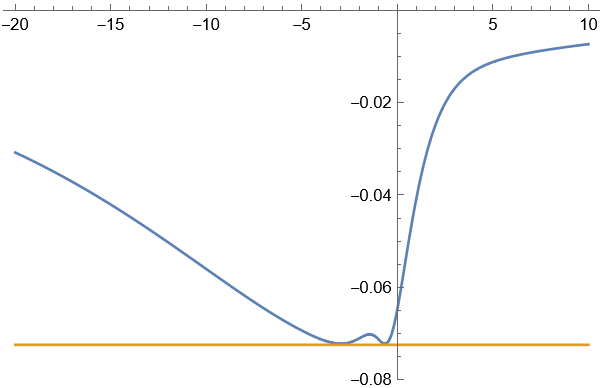}
    \caption{The bottom of the spectrum of the operator $\tilde{A}$, restricted to each two-dimensional subspace $\mbox{span}\{\ket{\eta}_{+},\ket{\eta}_{-}\}$. The horizontal line is $-0.0725$. This spectrum was computed by numerically integrating with Mathematica, taking $\varepsilon = 0.001$ rather than a limit.}
    \label{fig: spectrum}
\end{figure}

\section{Quantum rockets}
\label{rockets_app}

Under the assumption that the map (9) describes fuel combustion, consider a rocket that, most of the time, freely propagates through space, except at times $0=t_1<t_2<t_3<...<t_N$, when the rocket burns fuel instantaneously. We assume that, initially, the state of the rocket's center of mass is $\rho\in {\cal S}([0,l])$, with canonical operators $X^{(0)}, P^{(0)}$. At time $t_j$, the rocket burns a fuel mass $m_j$, hence reducing its mass to $M_j=M-\sum_{k=1}^jm_j$, and experiencing a transformation $\rho\to\Upsilon(\rho;\sigma^{(j)},M_{j-1},m_j)$, where $\sigma^{(j)}\in \S \left(\left[-\frac{\lambda}{2},\frac{\lambda}{2}\right]\right)$ of the fuel in the rocket's reference frame, with canonical operators $X^{(j)}_{REL}, P^{(j)}_{REL}$. Between the times $t_j$ and $t_{j+1}$, the rocket propagates freely and thus its canonical operators $X_R, P_R$ experience the transformation

\be
X_R\to X_R+\frac{t_{j+1}-t_{j}}{M_j}P_R,P_R\to P_R.
\label{free_evo}
\ee

\noindent Call $X_R^{(j)}, P_R^{(j)}$ the canonical operators of the rocket at time $t_j$, just before the new fuel combustion. From eqs. (8), (\ref{free_evo}) it is easy to see that they satisfy the relation

\begin{align}
&X^{(j)}_R=X^{(j-1)}_R-\frac{m_j}{M_j}X^{(j)}_{REL}+\frac{1}{M_j}P_R^{(j-1)}-\frac{1}{M_j-m_j}P^{(j)}_{REL},\nonumber\\
&P^{(j)}_R=\frac{M_j-m_j}{M_j}P_R^{(j-1)}-P^{(j)}_{REL}.
\label{iterative_rel}
\end{align}

Through repeated iteration of (\ref{iterative_rel}), we can express the rocket's final position operator $X_R^{(N)}$ as a linear combination of $X^{(0)}_R,P^{(0)}_R$ and $\{X_{REL}^{(j)}, P_{REL}^{(j)}\}$. That is, for some real vectors $\vec{c}, \vec{d}$, we have $X_R^{(N)}=\vec{c}\cdot\vec{X}+\vec{d}\cdot\vec{P}$, where $\vec{X}=(X_R^{(0)},X_{REL}^{(1)},...)$ and $\vec{P}=(P_R^{(0)},P_{REL}^{(1)},...)$. The probability of detecting the quantum rocket at time $t_N$ in $[a,\infty)$ and its classical counterpart is thus given by

\be
\left\langle\Theta\left(\vec{c}\cdot\vec{X}+\vec{d}\cdot\vec{P}-a\right)\right\rangle_{\rho},
\label{quantum_rocket_formula}
\ee
where $\rho=\rho^{(0)}\otimes\bigotimes_{k=1}^N\sigma^{(k)}$. Since eq. (\ref{iterative_rel}) also holds for classical systems, so does eq. (\ref{quantum_rocket_formula}), when we understand $\rho$ as a product of probability densities. We now consider a classical rocket with the same combustion schedule as the quantum one, and such that the probability densities for the classical moment variables $p_R^{(0)},p^{(1)}_{REL},p^{(2)}_{REL},...$ respectively coincide with those of the states $\rho^{(0)},\sigma^{(1)},\sigma^{(2)},...$. We further assume that the distributions of the initial position of the rocket and the fuel explosions respectively have supports $[0,l]$ and $\left[-\lambda/2,\lambda/2\right]$, just like in the quantum case. Then, the maximum probability of detecting the classical rocket in $[a,\infty)$ at time $t_N$ is
\be
\left\langle\Theta\left(\vec{d}\cdot\vec{P}-(a-L^+)\right)\right\rangle_{\rho},
\label{classical_rocket_formula}
\ee
where
\be
L^+\equiv l\max(0,c_0)+\frac{\lambda}{2}\sum_k|c_k|.
\ee
The maximum advantage $\varphi_R$ of such a quantum rocket is thus the result of maximizing
\be
\left\langle\Theta\left(\vec{c}\cdot\vec{X}+\vec{d}\cdot\vec{P}-a\right)-\Theta\left(\vec{d}\cdot\vec{P}-(a-L^+)\right)\right\rangle_{\rho},
\label{quantum_rocket_advantage}
\ee
over all separable states $\rho=\rho^{(0)}\otimes\bigotimes_{k=1}^N\sigma^{(k)}$ such that $\rho^{(0)}\in{\cal S}[0,l]$, $\sigma^{(j)}\in \S \left[-\lambda/2,\lambda/2\right]$, for $j=1,...,N$. Call $\rho^\star$ the corresponding maximizer (if the maximizer does not exist, then the following argument still carries through if the average value of (\ref{quantum_rocket_advantage}) with $\rho=\rho^\star$ is $\phi_R-\epsilon$).

Now, consider the commutator $[\vec{c}\cdot\vec{X},\vec{d}\cdot\vec{P}]=i\beta$, and assume that $\beta> 0$. Then, 
\[
X\equiv \vec{c}\cdot\vec{X}\to_S X, \; P\equiv \frac{1}{\beta}\vec{d}\cdot\vec{P},
\]
 are canonically conjugated operators. Let $\tilde{\rho}$ be the result of tracing out all degrees of freedom of $\rho^\star$, but that corresponding to $X,P$. Then we have that 
\be
\varphi_R=\left\langle\Theta\left(X+\beta P-a\right)-\Theta\left(\beta P-(a-L^+)\right)\right\rangle_{\tilde{\rho}},
\ee
with $\tilde{\rho}\in{\cal S}([L^{-},L^+])$, with 
\be
L^{-}:= l\min(0,c_0)-\frac{\lambda}{2}\sum_k|c_k|.
\ee
Hence we end up computing $\varphi_F$ under an extra restriction on the quantum states to be optimized. Through the metaplectic transformation $X\to X-L^-, P\to P$, we can map this problem to an optimization over the operator
\be
\varphi_R=\left\langle\Theta\left(X+\beta P-a'\right)-\Theta\left(\beta P-(a'-L)\right)\right\rangle_{\tilde{\rho}},
\ee
\noindent over a constrained set of quantum states contained in $\S([0,L])$, with $L=L^+-L^-$, $a'=a-L^-$. This means that $\varphi_R\leq \varphi(L^2/\beta)\leq\varphi(\infty)\approx 0.038452$. 

If $\beta<0$, we apply the time-reversal anti-unitary operator $X_{R}^{(0)}\to X_{R}^{(0)}$, $X_{REL}^{(j)}\to X_{REL}^{(j)}$, $P_{R}^{(0)}\to -P_{R}^{(0)}$, $P_{REL}^{(j)}\to P_{REL}^{(j)}$ on the operator of eq. (\ref{quantum_rocket_advantage}). This transformation does not affect the spatial support or separability of $\rho$, but effectively changes the sign of $\vec{d}$; and thus, of the commutator, in which case the argument above carries through.

Finally, if $\beta=0$, then $\vec{c}\cdot\vec{X},\vec{d}\cdot\vec{P}$ are commuting operators, in which case eq. (\ref{quantum_rocket_advantage}) cannot have a value greater than $0$.

The final conclusion is that a quantum rocket cannot be more advantageous than a quantum projectile.

\section{The restricted projectile scenario}
\label{variant_app}
\subsection{Computation of $p_c^\star(\mu,\nu)$}
\label{p_c_comput}

In this section, we solve the following problem.
\begin{problem}
Let $\mu(x), \nu(p)$ be the position and momentum distributions of a classical particle of mass $M$. What is the maximum probability $p^\star_c(\mu,\nu)$ that, after time $\Delta T$, we find the particle in the region $[a,\infty)$?
\label{problem_whole_distr}
\end{problem}

In classical mechanics, such a particle is described by its phase space distribution $W(x,p)$ constrained to have position and momentum marginals $\mu(x)$ and $\nu(p)$. Though our notation is similar to Wigner functions, we emphasize that here $W(x,p)\geq0$ is a valid probability distribution. The problem is maximizing  ${\text{Prob}(X+P\Delta T/M\geq a)}$ over random variables $(X,P)$ jointly distributed according to $W(x,p)$ with given marginals.

A discretized version of the problem is maximizing the fraction of pairs $(x_i,p_i)_{i=1}^N$, sampled from $W(x,p)$, satisfying $x_i+p_i\Delta T/M\geq a$. That is to find a permutation $\sigma\in S_N$ maximizing the fraction above. Let $y_i := p_i\Delta T/M$, the initial momentum distribution becomes $\tilde{\nu}(y)dy=\frac{M}{\Delta T}\nu\left(\frac{M}{\Delta T}y\right)dy$.

\begin{lemma}
Given $\vec{x}=(x_i)_{i=1}^N, \vec{y}=(y_j)_{j=1}^N$, let the indices $\hat{i},\hat{j}\in\{1,...,N\}$ be
\begin{align}
\hat{i}&:=\arg\min\{X_i:\exists j, X_i+Y_j\geq a\},\nonumber\\
\hat{j}&:=\arg\min\{Y_j: X_{\hat{i}}+Y_j\geq a\}.
\end{align}
Then, there exists an optimal permutation $\sigma\in S_N$ such that $\sigma(\hat{i})=\hat{j}$.
\end{lemma}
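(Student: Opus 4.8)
\emph{Proof proposal.} The plan is a textbook exchange argument on an optimal permutation. First I would dispose of the degenerate case: if $\{i : \exists j,\ x_i + y_j \ge a\}$ is empty, then every permutation scores zero successes, $\hat i$ is not defined, and the statement is vacuous. So assume this set is nonempty; then $\hat i$ is well defined, and since $x_{\hat i}$ pairs successfully with at least one $y_j$ the set $\{j : x_{\hat i} + y_j \ge a\}$ is nonempty, so $\hat j$ is well defined too. Ties in the two $\arg\min$'s are harmless — fix any choice.

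Now fix an optimal permutation $\sigma \in S_N$. If $\sigma(\hat i) = \hat j$ we are done, so suppose $\sigma(\hat i) = k \ne \hat j$ and set $\hat i' := \sigma^{-1}(\hat j)$, which differs from $\hat i$. Define $\sigma'$ by $\sigma'(\hat i) = \hat j$, $\sigma'(\hat i') = k$, and $\sigma'(\ell) = \sigma(\ell)$ otherwise; this is again a permutation since it merely swaps the images of $\hat i$ and $\hat i'$. As $\sigma$ and $\sigma'$ agree off $\{\hat i, \hat i'\}$, the change in the number of successful pairs is
\[
\Delta \;=\; \Big(\mathbf{1}[x_{\hat i}+y_{\hat j}\ge a] + \mathbf{1}[x_{\hat i'}+y_k\ge a]\Big) \;-\; \Big(\mathbf{1}[x_{\hat i}+y_k\ge a] + \mathbf{1}[x_{\hat i'}+y_{\hat j}\ge a]\Big),
\]
and it suffices to prove $\Delta \ge 0$, for then $\sigma'$ is also optimal and satisfies $\sigma'(\hat i) = \hat j$.

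The crux is to use the two minimality properties. By the definition of $\hat j$ we have $x_{\hat i}+y_{\hat j}\ge a$, so the first summand of the left bracket is $1$; hence $\Delta \ge 0$ automatically unless the right bracket equals $2$, i.e.\ unless $x_{\hat i}+y_k\ge a$ \emph{and} $x_{\hat i'}+y_{\hat j}\ge a$. Assume these. From $x_{\hat i'}+y_{\hat j}\ge a$, the value $x_{\hat i'}$ belongs to the set defining $\hat i$, so $x_{\hat i}\le x_{\hat i'}$; from $x_{\hat i}+y_k\ge a$, the value $y_k$ belongs to the set defining $\hat j$, so $y_{\hat j}\le y_k$. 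Adding, $x_{\hat i'}+y_k \ge x_{\hat i}+y_{\hat j}\ge a$, so the second summand of the left bracket is also $1$ and $\Delta = 0$. This completes the argument.

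I do not anticipate a real obstacle: the only points needing care are the well-definedness of $\hat i,\hat j$, the verification that a single transposition of images changes the objective only through indices $\hat i$ and $\hat i'$, and the bookkeeping of the $2\times 2$ indicator inequality above. The payoff, which I would flag immediately after the proof, is that the lemma licenses a greedy recursion — commit the pair $(x_{\hat i},y_{\hat j})$, delete both entries, and repeat on the reduced instance — whose continuum limit is precisely the ODE system \eqref{ODE_main} for $p_c^\star(\mu,\nu)$.
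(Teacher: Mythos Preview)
Your proof is correct and follows essentially the same exchange argument as the paper: take an optimal $\sigma$, swap the images of $\hat i$ and $\sigma^{-1}(\hat j)$, and use the minimality of $x_{\hat i}$ and $y_{\hat j}$ to show the number of successful pairs does not drop. Your version is in fact a bit tidier than the paper's, since you make the indicator-function bookkeeping for $\Delta$ explicit and dispose of the vacuous case up front.
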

\begin{proof}
Let $\sigma$ be a permutation maximizing the number of pairs $(x_i,y_{\sigma(i)})$ satisfying $x_i+y_{\sigma(i)}\geq a$. If $\sigma(\hat{i})=\hat{j}$, then the lemma holds with the permutation $\sigma$. If $\sigma(\hat{i})\not=\hat{j}$, then the lemma holds with the permutation $\sigma'\in S_n$, defined by
\begin{align}
&\sigma'(\hat{i})=\hat{j}, \sigma'(\sigma^{-1}(\hat{j}))=\sigma(\hat{i}),\nonumber\\
&\sigma'(i)=\sigma(i), \forall i\not=\hat{i}, \sigma^{-1}(\hat{j}).
\end{align}
Indeed, note that the transition $\sigma\to\sigma'$ just affects the pairs
\begin{equation}
    (x_{\hat{i}},y_{\sigma(\hat{i})}),(x_{\sigma^{-1}(\hat{j})},y_{\hat{j}})\label{pairs}.
\end{equation}

By definition of $\hat{i},\hat{j}$, if the second pair adds up to $a$ or more, then $x_{\sigma^{-1}(\hat{j})}\geq x_{\hat{i}}, y_{\sigma(\hat{i})}\geq y_{\hat{j}}$. In that case, the transition will make both final pairs add up to $a$ or more. On the contrary, if the second pair adds up to a number lower than $a$, this means that, at most, just the first pair was satisfying the sum condition. After the transition, though, the pair $(x_{\hat{i}},y_{\hat{j}})$ satisfies it by definition. So once again the transition cannot decrease the number of pairs satisfying the sum condition.

It follows that $\sigma'$ is optimal if $\sigma$ is optimal. Since $\sigma'(\hat{i})=\hat{j}$, the conditions of the lemma are satisfied.
\end{proof}

The lemma suggests a simple algorithm to arrive at an optimal permutation $\sigma$, given the vectors $\vec{x}, \vec{y}$. Namely,
\begin{enumerate}
    \item Define ${\cal I}=\{1,...,N\}, {\cal J}=\{1,...,N\}$, ${\cal S}=\emptyset$.
    \item Find $\hat{i}, \hat{j}$ such that
    \begin{align}
    &\hat{i}=\arg\min\{x_i:i\in {\cal I},\exists j\in {\cal J}, x_i+y_j\geq a\},\nonumber\\
    &\hat{j}=\arg\min\{y_j:j\in {\cal J}, x_{\hat{i}}+y_j\geq a\}.
    \end{align}
\noindent If no such indices exist, return any permutation $\sigma\in S_{N}$ with ${\cal S}\subset\{(i, \sigma(i)):i\in\{1,...,N\}\}$ and halt.
    \item
    Redefine ${\cal S}\leftarrow {\cal S}\cup \{(\hat{i},\hat{j})\}$, ${\cal I}\leftarrow {\cal I}\setminus \{\hat{i}\}, {\cal J}\leftarrow {\cal J}\setminus \{\hat{j}\}$ and go to 2.
\end{enumerate}

To solve the problem posed at the beginning of the section, we just need to apply the algorithm above in a scenario where $N\gg 1$. In that limit, the quantities $N\mu(x)dx, N\tilde{\nu}(y)dy$ approximate the number of entries of $\vec{x},\vec{y}$ with values in $(x, x+dx]$ and $(y,y+dy]$. 

Suppose that we have already paired or discarded all entries of $\vec{x}$ with value smaller than or equal to $x$. Call $S$ the number of pairs already established and $Q$, the number of elements of $\vec{y}$ which are still unpaired and are greater than or equal to $a-x$. Following the algorithm, we need to check how many of the $N\mu(x)dx$ points with value in $(x,x+dx]$ we can pair with the remaining entries of $\vec{y}$. The only possible candidates in $\vec{y}$ are either among the entries already counted in $Q$ or among the $N\tilde{\nu}(a-x)dx$ entries with values in $(a-x,a-(x+dx)]$. If $Q>0$, then all the entries with values in $(x,x+dx]$ can be paired, i.e., $dS=N\mu(x)dx$. In that case, after removing those, the remaining entries of vector $\vec{x}$ are in the interval $(x+dx,\infty)$. Also, the number of unpaired elements of $\vec{y}$ greater than $a-(x+dx)$ are $Q+dQ$, with $dQ=Ndx(\tilde{\nu}(a-x)-\mu(x))$. If $Q=0$, then there are two possibilities: (1) $\tilde{\nu}(a-x)\geq \mu(x)$, in which case $N\mu(x)dx$ entries can be paired, and so $dS=N\mu(x)dx$, $dQ=Ndx(\tilde{\nu}(a-x)-\mu(x))$; (2) $\tilde{\nu}(a-x)< \mu(x)$, in which case just $N\tilde{\nu}(a-x)dx$ can be paired, and so $dS=N\tilde{\nu}(a-x)dx$, $dQ=0$. Defining $q\equiv \frac{Q}{N}$, $s\equiv \frac{S}{N}$, we hence have that the functions $q(x),s(x)$ follow the system of differential equations
\begin{align}
    \frac{ds}{dx}=&\Theta(q)\mu(x)+(1-\Theta(q))\min\left(\mu(x),\tilde{\nu}(a-x)\right),\nonumber\\
    \frac{dq}{dx}=&\Theta(q)(\tilde{\nu}(a-x)-\mu(x))+\nonumber\\
    &(1-\Theta(q))\max\left(\tilde{\nu}(a-x)-\mu(x),0\right).
    \label{ODE}
\end{align}

Call $(s(x),q(x))$ the solution of the system of ordinary differential equations (\ref{ODE}) with the boundary condition $s(-\infty)=q(-\infty)=0$. From all the above it follows that the solution of Problem \ref{problem_whole_distr} is $p_c^\star(\mu,\nu)=s(\infty)$.

\subsection{The gradient of $p_c^\star(\mu,\nu)$}
\label{grad_p_c_comput}
Suppose that the distributions $\mu,\nu$ depend on one parameter $\lambda$, i.e., $\mu=\mu(x;\lambda),\nu=\nu(x;\lambda)$. We wonder how much $p_c^\star(\lambda)= p_c^\star(\mu(\bullet;\lambda),\nu(\bullet;\lambda))$ differs from $p_c^\star(\lambda+\delta\lambda)$, with $\delta\lambda\ll 1$. Let us assume that the roots of $q(x;\lambda)$ can be expressed as $\bigcup_{i=1}^N[x^-_i,x^+_{i}]$, with $x^+_i<x^-_{i+1}$, for all $i$. Then, 
\begin{align}
p^\star_c(\lambda)=\sum_i\int_{x_i^-}^{x_{i}^+}f^-(x;\lambda) dx+
\sum_i\int_{x_i^+}^{x_{i+1}^-}f^+(x;\lambda) dx,
\end{align}
\noindent where $x_{N+1}^{-}=\infty$ if $x_N^{+}<\infty$, and
\begin{align}
&f^+(x;\lambda):=\mu(x;\lambda),\nonumber\\
&f^-(x;\lambda):=\min(\mu(x;\lambda),\tilde{\nu}(a-x;\lambda)).
\label{efes_beer}
\end{align} 
An increment of $\lambda$ will thus have two effects on $p^\star_c(\lambda)$. On one hand, the functions $f^+,f^-$ will respectively change by the amounts $\frac{\partial}{\partial \lambda}f^+\delta\lambda$, $\frac{\partial}{\partial \lambda}f^-\delta\lambda$. On the other hand, the set of points $x$ where $q(x,\lambda+\delta\lambda)$ vanishes will change. Assuming that the kernel of $q(\bullet,\lambda+\delta\lambda)$ is of the form $\bigcup_i[x^-_i+\delta x^-_i,x^+_{i}+\delta x^+_i]$, then we have that
\begin{widetext}
\begin{align}
p^\star_c(\lambda+\delta\lambda)-p^\star_c(\lambda)&\approx\delta \lambda\sum_i\int_{x_i^+}^{x_{i+1}^-}\frac{\partial}{\partial \lambda}f^+(x;\lambda) dx + \delta \lambda\sum_i\int_{x_i^-}^{x_{i}^+}\frac{\partial}{\partial \lambda}f^-(x;\lambda) dx \nonumber\\ 
&+ \sum_i\delta x_i^-\left(f^+(x_i^-;\lambda)-f^-(x_i^-;\lambda)\right)+\sum_i \delta x_i^+\left(f^-(x_i^-;\lambda)-f^+(x_i^-;\lambda)\right).
\label{variation}
\end{align}
\end{widetext}

From eqs. (\ref{ODE}), and, assuming that $\mu,\nu$ are smooth, we have that 
\begin{equation}
\tilde{\nu}(a-x_i^+;\lambda)-\mu(x_i^+;\lambda)=0.
\label{zero_diff}
\end{equation}

Indeed, if this quantity were negative, then $q$ would have remained zero; and, if it were positive, then there would exist $x_i^-<x<x_i^+$ such that $\tilde{\nu}(a-x;\lambda)-\mu(x;\lambda)=0$, and $q$ would have lifted itself from zero at $x$ instead of $x_i^+$. By (\ref{efes_beer}), this implies that the last term of eq. (\ref{variation}) vanishes.

As for the second-to-last term, from eq. (\ref{ODE}) it follows that, for $x=x_i^-+\delta x$, $x<x_i^-+\delta x_i^-$,

\begin{align}
q(x;\lambda+\delta\lambda)\approx &\delta \lambda\int_{x_{i-1}^+}^{x_i^-}\frac{\partial}{\partial \lambda}(\tilde{\nu}(a-y;\lambda)-\mu(y;\lambda))dy+\nonumber\\
&\delta x (\tilde{\nu}(a-x_i^-;\lambda)-\mu(x_i^-;\lambda)),
\end{align}
\noindent where we have used the identities $q(x_{i-1}^+,\lambda)=0$ and (\ref{zero_diff}). Equaling this last equation to zero, we find that

\begin{equation}
\delta x_i^-\approx\frac{\delta \lambda\int_{x_{i-1}^+}^{x_i^-}\frac{\partial}{\partial \lambda}(\tilde{\nu}(a-y;\lambda)-\mu(y;\lambda))dy}{\tilde{\nu}(a-x_i^-;\lambda)-\mu(x_i^-;\lambda)}.
\label{zeros_q}
\end{equation}

Define
\begin{align}
&g^+(x;\lambda):=\tilde{\nu}(a-x;\lambda)-\mu(x;\lambda),\nonumber\\
&g^-(x;\lambda):=\max\left(\tilde{\nu}(a-x;\lambda)-\mu(x;\lambda),0\right).
\end{align}
\noindent From eqs. (\ref{variation}), (\ref{zeros_q}), we find that $\frac{\partial}{\partial \lambda} p^\star_c(\lambda)=s_\lambda(\infty)$, where the function $s_\lambda(x)$ and the auxiliary function $q_\lambda(x)$ evolve according to the system of ordinary differential equations
\begin{align}
\frac{dq}{dx}=&\Theta(q)g^+(x;\lambda)+(1-\Theta(q))g^-(x;\lambda),\nonumber\\
\frac{ds_\lambda}{dx}=&\Theta(q)\frac{\partial}{\partial\lambda} f^+(x;\lambda)+(1-\Theta(q))\frac{\partial}{\partial\lambda} f^-(x;\lambda),\nonumber\\
\frac{dq_\lambda}{dx}=&\Theta(q)\frac{\partial}{\partial\lambda} g^+(x;\lambda)+(1-\Theta(q))\frac{\partial}{\partial\lambda} g^-(x;\lambda),
\label{gradients}
\end{align}
\noindent for $x\not\in \{x_i^-\}_i$, and otherwise are updated as indicated below (notice the update order):
\begin{align}
&s_\lambda(x)\to s_\lambda(x)+\frac{q_\lambda(x)}{\mu(x;\lambda)-\tilde{\nu}(a-x;\lambda)},\nonumber\\
&q_\lambda(x)\to 0,
\label{discont_update}
\end{align}
\noindent Here, the boundary conditions are $q(-\infty)=s_\lambda(-\infty)=q_\lambda(-\infty)=0$. Note that the first line of (\ref{gradients}) is the same as the second line of (\ref{ODE}). Hence it is advisable to run the algorithm to find $s(\infty)$ and its differential $s_\lambda(\infty)$ at the same time. Also, notice that the algorithm sometimes requires us to differentiate a non-differentiable function, such as $f^{-}(x;\lambda)=\min(\mu(x;\lambda),\nu(x;\lambda))$. In that case, we define $\partial f^{-}(x;\lambda)/\partial\lambda$ as $\partial \mu(x;\lambda)/\partial\lambda$, if $\mu(x;\lambda)<\nu(x;\lambda)$; or $\partial \nu(x;\lambda)/\partial\lambda$, otherwise. In doing so, we are implicitly assuming that the equation $\mu(x;\lambda)=\nu(x;\lambda)$ has a countable number of roots in $x$. The definition of $\partial g^{-}(x;\lambda)/\partial\lambda$ is analogous.

\subsection{Maximizing $\mathbb{W}(\rho)$}
Since the maximum quantum advantage is independent of the parameters $M,\Delta T, a$, from now on we take $M=\Delta T=1$, $a=0$ in $\mathbb{W}(\rho)$ that is
\begin{equation}
{\cal W}(\rho)\equiv\langle\Theta(X+P)\rangle_\rho-p_c^\star(\rho).
\end{equation}

We will only perform projected gradient ascend in the subspace $\H_N=\mbox{span}\{\ket{n}:n=0,\dots,N\}$, noting that we can get better achievable {\em lower bounds} with increasing number of iterations and increasing $N$. For a learning rate $\epsilon$, each iteration updates the state according to
\begin{equation}\label{proj_grad}
\rho^{k+1}={\cal P}(\rho^{k}+\epsilon\nabla_{\rho}{\cal W}(\rho)),
\end{equation}
with the projection ${\cal P}$ ensuring a valid density matrix. It remains to compute various quantities above.

Firstly, for any matrix $M$, the projection ${{\cal P}(M)=\text{argmin}_Z\|Z-M\|_2}$ to the set of density matrix can be cast as a semidefinite program \cite{sdp}: 
\begin{align}
\min_{Z,\Lambda} &\quad\tr(\Lambda)\nonumber\\
\mbox{s.t. } & \left(\begin{array}{cc}\Lambda&M-Z\\M-Z&\id\end{array}\right)\geq 0,\nonumber\\
&Z\geq0,\tr(Z)=1.
\end{align}
To solve this program, we used the MATLAB package YALMIP \cite{yalmip} in combination with the semidefinite programming solver MOSEK \cite{mosek}. 

Next, $\nabla_\rho{\cal W}(\rho)=\nabla_\rho\tr(\rho \Theta(X+P))-\nabla_\rho s(\infty)$. Write
\begin{equation}
\rho=\sum_{m,n=0}^N(\mbox{Re}(\rho_{mn})+ i\mbox{Im}(\rho_{mn}))\ket{m}\bra{n} 
\end{equation}
and let $\rho^R_{mn}=\Re(\rho_{mn}),\rho^I_{mn}=\Im(\rho_{mn})$ be our real variables to be optimized. Then the first term
\begin{align}
    \nabla_\rho\tr(\rho \Theta(X+P)) = \nabla_\rho\tr(\rho{\cal O}_N) = {\cal O}_N
\end{align}
where ${\cal O}_N$ is the $(N+1)\times(N+1)$ matrix with entries given by \eqref{theta_interm} for $\phi=\pi/4$.

Finally, the last term $\nabla_\rho s(\infty)=[s_{\rho_{mn}}(\infty)]_{m,n=0}^N$ consists of individual gradients $s_{\rho_{mn}}(\infty)$, in the notation of the previous section, receiving contribution from the gradient with respect to real parameters $\rho^R_{mn},\rho^I_{mn}$. Call $\mathbb{M}_{n}(\C)$ the set of $n\times n$ complex matrices. From \eqref{gradients} and \eqref{discont_update}, we have $\nabla_\rho s(\infty)=s_\rho(\infty)$ for  ${s_\rho:\R\to\mathbb{M}_{N+1}(\C)}$ the solution of 
\begin{align}
\frac{dq}{dx} &= \Theta(q)g^+(x;\rho)+(1-\Theta(q))g^-(x;\rho),\nonumber\\
\frac{ds_\rho}{dx} &= \Theta(q)\nabla_\rho f^+(x;\rho)+(1-\Theta(q))\nabla_\rho f^-(x;\rho),\nonumber\\
\frac{dq_\rho}{dx} &= \Theta(q)\nabla_\rho g^+(x;\rho)+(1-\Theta(q))\nabla_\rho g^-(x;\rho),
\end{align}
\noindent for $x\not\in \{x_i^-\}_i$, and otherwise are updated as
\begin{align}
&s_\rho(x)\to s_\rho(x)+\frac{q_\lambda(x)}{\mu(x;\lambda)-\tilde{\nu}(a-x;\lambda)},\nonumber\\
&q_\rho(x)\to 0_{N+1\times N+1}.
\end{align}

This system of differential equations contains auxiliary functions $q:\R\to\R$ and $q_\rho:\R\to \mathbb{M}_{N+1}(\C)$ to be solved, as well as data functions $f^+,f^-,g^+,g^-:\R\to\R$ depending on parameters $\rho$ and their gradients. Recalling their definitions, we get that their gradients depends on
\begin{align}
\nabla_\rho\mu(x;\rho) &= \ketbra{v(x)}{v(x)},\nonumber\\
\nabla_\rho\nu(p;\rho) &= \ketbra{w(p)}{w(p)},
\end{align}
where $\ket{v(x)}, \ket{w(x)}$ are the $N+1$-dimensional vectors with entries $v(x)_n=\braket{n}{x}$, $w(x)_n=\braket{n}{p}$ for $n=0,...,N$ and 
\begin{align}
\braket{x}{n} &= \frac{1}{\sqrt{2^nn!}\pi^{1/4}}e^{-\frac{x^2}{2}}H_n(x),\nonumber\\
\braket{p}{n} &= \frac{(-i)^n}{\sqrt{2^nn!}\pi^{1/4}}e^{-\frac{p^2}{2}}H_n(p),
\end{align}
where $H_n(z)$ denotes the Hermite polynomial of degree $n$, defined by
\begin{equation}
H_n(z):=(-1)^ne^{z^2}\frac{d^n}{dz^n}e^{-z^2}.
\end{equation}
In summary, we have completely specify the data defining the system of differential equations, as well as the computation leading to the update \eqref{proj_grad}.

To solve the above system of ordinary differential equations, as well as \eqref{ODE}, we used the Euler explicit method with step $\Delta x=0.0001$. For practical reasons, we had to limit the range of possible values of $x$, which we set to be $[-40,40]$.

\end{appendix}

\end{document}